\definecolor{darkgreen}{rgb}{0,0.5,0}
\crefname{theorem}{Theorem}{Theorems}
\Crefname{lemma}{Lemma}{Lemmas}
\Crefname{figure}{Figure}{Figures}
\Crefname{claim}{Claim}{Claims}
\Crefname{observation}{Observation}{Observations}
\newtheorem{fact}{Fact}[section]
\newtheorem{lemma}{Lemma}[section]
\newtheorem{observation}{Observation}[section]
\newcommand{\silence}{\mathsf{silence}}
\newcommand{\collision}{\mathsf{collision}}
\newcommand{\cd}{\mathsf{CD}}
\newcommand{\nocd}{\mathsf{No}{\text -}\mathsf{CD}}
\newcommand{\binomial}{\operatorname{Binomial}}
\newcommand{\ignore}[1]{}
\newcommand{\Prob}{\operatorname{Pr}}
\newcommand{\poly}{{\operatorname{poly}}}
\newcommand{\dist}{\operatorname{dist}}
\newcommand{\ecc}{\operatorname{eccentricity}}
\newcommand{\ID}{\operatorname{ID}}
\newcommand{\LOCAL}{\mathsf{LOCAL}}
\newcommand{\CONGEST}{\mathsf{CONGEST}}
\newcommand{\sr}{\mathsf{SR}{\text -}\mathsf{comm}}
\newcommand{\minsr}{\mathsf{SR}{\text -}\mathsf{comm}^{\operatorname{min}}}
\newcommand{\maxsr}{\mathsf{SR}{\text -}\mathsf{comm}^{\operatorname{max}}}
\newcommand{\apxsr}{\mathsf{SR}{\text -}\mathsf{comm}^{\operatorname{apx}}}
\newcommand{\allsr}{\mathsf{SR}{\text -}\mathsf{comm}^{\operatorname{all}}}
\newcommand{\multisr}{\mathsf{SR}{\text -}\mathsf{comm}^{\operatorname{multi}}}
\title{The Energy Complexity of Diameter and Minimum Cut Computation in Bounded-Genus Networks}
\author{Yi-Jun Chang\\
National University of Singapore
}
\begin{document}
\date{}
\maketitle
\thispagestyle{empty}
\setcounter{page}{0}

\begin{abstract}
This paper investigates the energy complexity of distributed graph problems in multi-hop radio networks, where the energy cost of an algorithm is measured by the maximum number of awake rounds of a vertex. Recent works revealed that some problems, such as broadcast, breadth-first search, and maximal matching,  can be solved with energy-efficient algorithms that consume only $\poly \log n$ energy. However, there exist some problems, such as computing the diameter of the graph, that require $\Omega(n)$ energy to solve. To improve energy efficiency for these problems, we focus on a special graph class: bounded-genus graphs. We present algorithms for computing the exact diameter, the exact global minimum cut size, and a $(1
\pm\epsilon)$-approximate $s$-$t$ minimum cut size with $\tilde{O}(\sqrt{n})$ energy for bounded-genus graphs. Our approach is based on a generic framework that divides the vertex set into high-degree and low-degree parts and leverages the structural properties of bounded-genus graphs to control the number of certain connected components in the subgraph induced by the low-degree part.
\end{abstract}

\section{Introduction \label{sect.intro}}

We consider the multi-hop \emph{radio network} model~\cite{chlamtac1985broadcasting} of distributed computing, where a communication network is modeled as a graph $G=(V,E)$: Each vertex $v\in V$ is a computing device and each edge $\{u, v\} \in E$ indicates that $u$ and $v$ are within the transmission range of each other.
The graph topology of the underlying network $G$ is initially unknown to all devices, except that two parameters $n = |V|$ and $\Delta = \max_{v \in V} \deg(v)$ are global knowledge.

We assume that the communication proceeds in synchronized rounds. All devices agree on the same start time.
In each round, each device can choose to do one of the following three operations: (i) listen to the channel, (ii) transmit a message, or (iii) stay idle.
We do not allow a device to simultaneously transmit and listen, and we assume that there is no message size constraint. 

Each transmitting or idle device does not receive any feedback from the communication channel, so a transmitting device $u$ does not know whether its message is successfully received by any of its neighbors $N(u)$.
A listening device $v$ successfully receives a message from a transmitting device $u \in N(v)$ if $u$ is the only transmitting device in $N(v)$. If the number of transmitting devices in $N(v)$ is zero, then a listening device $v$ hears $\silence$. If the number of transmitting devices in $N(v)$ is greater than one, then the feedback that a listening device $v$ receives depends on the underlying model. In the $\nocd$ model (without collision detection), $v$ still hears $\silence$. In the $\cd$ model (with collision detection), $v$ hears $\collision$. All our algorithms presented in this paper work in the $\nocd$ model.

We assume that each device has access to an unlimited local random source.
We say that an event occurs \emph{with high probability} (w.h.p.) if the event occurs with probability $1 - 1/\poly(n)$.
If we let each vertex $v \in V$ locally assign themselves $O(\log n)$-bit identifiers $\ID(v)$, then they are distinct w.h.p., so  we may assume that each device  has a distinct identifier  of length  $O(\log n)$.

\paragraph{Complexity measures.}
\emph{Time} and \emph{energy} are the two main complexity measures of distributed algorithms in radio networks.
The {time complexity} of an algorithm is the number of rounds of the algorithm in the worst case. Unless otherwise stated, all our algorithms are \emph{Monte Carlo} in that they succeed w.h.p.
The {energy complexity} of an algorithm is  the maximum energy cost of a device in the worst case, where the energy cost of a device $v$ is the number of rounds that $v$ is non-idle.  
The motivation for studying energy complexity is that energy is a scarce resource in small battery-powered wireless devices, and such devices can save energy by entering a low-power sleep mode.

\subsection{Prior work}

Most of the early work on the energy complexity focused on \emph{single-hop} radio networks, which is the special case where $G=(V,E)$ is a complete graph.
Over the last two decades, there has been a long line of research to optimize the energy complexity of \emph{leader election} and its related problems in single-hop radio networks~\cite{BenderKPY16,bordim2000energy,caragiannis2005basic,ChangKPWZ17,Chang2021detLE,ChangSPAA2022,JurdzinskiKZ02cocoon,JurdziskiKZ03,JurdzinskiKZ02podc,kardas2013energy,kutylowski2003adversary,lavault2007quasi,nakano2000randomized}. 

This line of research was recently extended to multi-hop radio networks~\cite{ChangDHHLP18,Chang20bfs,dani2021wake,dani2022wake}.
Chang~et~al.~\cite{ChangDHHLP18} considered the  problem of \emph{broadcasting} a message from one device to all other devices in a multi-hop radio network.
They showed that broadcasting can be done in $\poly \log n$ energy. Specifically, they presented randomized broadcasting algorithms for $\cd$ and $\nocd$ using energy $O\left(\frac{\log n\log\log\Delta}{\log\log\log\Delta}\right)$ and $O(\log\Delta\log^2 n)$ w.h.p., respectively. They also proved that any algorithm transmitting a message from one endpoint to the other endpoint of an $n$-vertex path costs $\Omega(\log n)$ energy \emph{in expectation}. The lower bound applies even to the $\LOCAL$ model of distributed computing.

 Chang~et~al.~\cite{Chang20bfs} showed that \emph{breadth-first search} can be done w.h.p.~using $2^{O\left(\sqrt{\log n \log \log n}\right)}$ energy in $\nocd$. Their algorithm is based on a hierarchical clustering using the low-diameter decomposition algorithm of Miller, Peng, and Xu~\cite{miller2013parallel}. 
The energy complexity of breadth-first search was recently improved to $\poly \log n$ by Dani and Thomas~\cite{dani2022wake}.
Combining the polylogarithmic-energy breadth-first search algorithm with the diameter approximation algorithm of Roditty and Williams~\cite{RodittyW13}, an approximation $\tilde{D}$ of the diameter $D$ such that $\tilde{D} \in \left[\lfloor 2D/3\rfloor, D\right]$ can be computed with $\tilde{O}(\sqrt{n})$ energy w.h.p.~\cite{Chang20bfs}. The notation $\tilde{O}(\cdot)$ suppresses any $\poly \log n$ factor.

Dani~et~al.~\cite{dani2021wake} showed that a \emph{maximal matching} can be computed in $O(\Delta \log n)$ time and $O(\log \Delta \log n)$ energy w.h.p in $\nocd$. There exists a family of graphs such that these time and energy bounds are simultaneously optimal up to polylogarithmic factors.

\subsection{Our contribution}

Not all problems admit energy-efficient algorithms in multi-hop radio networks. It was shown in~\cite{Chang20bfs} that any algorithm that computes a $(1.5-\epsilon)$-approximation of the diameter requires $\tilde{\Omega}(n)$ energy w.h.p. The lower bound holds even on graphs with arboricity $O(\log n)$ and treewidth $O(\log n)$.

 \paragraph{Bounded-genus graphs.} To improve energy efficiency for diameter computation, we focus on the class of bounded-genus graphs.  The {\em genus} of a graph $G$ is the minimum number $g$ such that $G$ can be drawn on an oriented surface of $g$ handles without crossing. For example, planar graphs are the graphs with genus zero, and the graphs that can be drawn on a torus without crossing are the graphs with genus at most one.
A class of graphs is called \emph{bounded-genus} if the genus of all graphs in the class can be upper bounded by some constant $g = O(1)$.

\paragraph{Diameter.} We show that the diameter of the graph can be computed using $\tilde{O}(\sqrt{n})$ energy w.h.p.~in bounded-genus graphs.

 \begin{restatable}{theorem}{thmdiammain}\label{lem:diam-ub}
There is an algorithm that computes the diameter in $\tilde{O}(n^{1.5})$ time and $\tilde{O}(\sqrt{n})$ energy w.h.p.~for bounded-genus graphs in $\nocd$.
\end{restatable}

 Our approach is based on a generic framework that divides the vertex set into high-degree and low-degree parts.
 We then classify the connected components of the subgraph induced by the low-degree part into several types. We  will leverage the structural properties of bounded-genus graphs to upper-bound the number of connected components of one type.
 For the remaining connected components, we will design energy-efficient algorithms that extract all the necessary information from these connected components for the purpose of diameter computation.

\paragraph{Minimum cut.}  Our approach is sufficiently general so that it is applicable to other problems as well.
Using the same approach, we show that the exact global minimum cut size and a $(1\pm \epsilon)$-approximate $s$-$t$ minimum cut size can also be computed using $\tilde{O}(\sqrt{n})$ energy w.h.p.~in bounded-genus graphs.

 \begin{restatable}{theorem}{thmcutmain}\label{thm:mincut}
There is an algorithm that computes the minimum cut size in $\tilde{O}(n^{1.5})$ time and $\tilde{O}(\sqrt{n})$ energy w.h.p.~for bounded-genus graphs in $\nocd$.
\end{restatable}

 \begin{restatable}{theorem}{thmcutapxmain}\label{thm:stmincut}
There is an algorithm that computes a $(1 \pm \epsilon)$-approximate $s$--$t$ minimum cut size in $\tilde{O}(n^{1.5}) + \tilde{O}(\sqrt{n}) \cdot \epsilon^{-O(1)}$ time and $\tilde{O}(\sqrt{n} + \epsilon^{-O(1)})$ energy w.h.p.~for bounded-genus graphs in $\nocd$.
\end{restatable}

To complement these algorithmic results, we show that
any algorithm that computes the exact size of an $s$--$t$ minimum cut or a global minimum cut requires $\Omega(n)$ energy. The lower bound for the $s$--$t$ minimum cut holds even for planar bipartite graphs, so it is necessary that we consider approximation algorithms for this problem. These lower bounds apply to both $\nocd$ and $\cd$.

 \begin{restatable}{theorem}{thmLBa}\label{lem:learn-deg}
For any randomized algorithm that computes the $s$--$t$ minimum cut size of a planar bipartite graph w.h.p.~in $\cd$, the energy complexity of the algorithm is $\Omega(n)$. 
\end{restatable}
  
 \begin{restatable}{theorem}{thmLBb}\label{lem:learn-deg2}
For any randomized algorithm that computes the minimum cut size of a unit-disc graph w.h.p.~in $\cd$, the energy complexity of the algorithm is $\Omega(n)$. 
\end{restatable}

\subsection{Additional related work}

Klonowski and Pajak~\cite{KlonowskiP17} considered a variant of the model where only transmitting costs energy, and they showed that in $\nocd$, for any $1 \leq  \varphi \leq O(\log n /  \log \log n)$, broadcasting can be solved in $O((D+\varphi) n^{1/\varphi} \varphi)$ time using $O(\varphi)$ transmission per device.

There are numerous works studying energy-aware distributed computing in multi-hop networks from different perspectives.
In radio networks, the power of a signal received is proportional to $O(1/d^{\alpha})$, where $d$ is the distance to the sender, and $\alpha$ is a constant related to environmental factors. Kirousis~et~al.~\cite{KirousisKKP00} studied the optimization problem of assigning transmission ranges of devices subject to some connectivity and diameter constraints so as to minimize the total power consumption. See~\cite{Ambuhl05,Clementi01,TakagiK84} for related work.

There are several works~\cite{BerenbrinkHC07,EphremidesT90,SenH97} on the subject of reducing the number of rounds or transmissions required to complete a specific communication task. In the setting of known network topology, Gasieniec~et~al.~\cite{Gsieniec07} designed a randomized protocol for broadcasting in $O(D + kn^{1/(k-2)}\log^2 n)$ rounds such that each device transmits at most $k$ times.

The energy complexity has recently been studied  in  the well-known $\LOCAL$ and $\CONGEST$ models of distributed computing~\cite{augustine2022distributed,barenboim_et_al:LIPIcs.DISC.2021.10,ChatterjeeGP20,dufoulon2022sleeping,ghaffari2022average}. 

There is a large body of research on distributed graph algorithms in special graph classes such as planar networks, bounded-genus networks, or more broadly $H$-minor-free networks:  distributed approximation~\cite{
akhoondian2016local,
czygrinow2008fast,
lenzen2013distributed,
wawrzyniak2014strengthened}, low-congestion shortcuts and its applications~\cite{ghaffari2016distributed,
ghaffari2021low,
ghaffari2017near,
haeupler2018minor}, and other planar graph algorithms~\cite{li2019planar,parter2020distributed}.

\section{Tools\label{sect:tools}}
In this section, we present the basic tools that we use in our algorithms.

\subsection{Communication between two sets of vertices}

Let $\mathcal{S}$ and $\mathcal{R}$ be two  vertex sets that are not necessarily disjoint.
The task $\sr$~\cite{ChangDHHLP18} is defined as follows.
Each vertex $u \in \mathcal{S}$ holds a message $m_u$ that it wishes to transmit,
and each vertex $v \in \mathcal{R}$ wants to receive one message from vertices in $N^+(v) \cap \mathcal{S}$, where $N^+(v) = N(v) \cup \{v\}$ is the inclusive neighborhood of $v$. The message that $v \in \mathcal{R}$ receives can be any message in $N^+(v) \cap \mathcal{S}$.
In other words, the task $\sr$ requires that w.h.p.~for
each vertex $v \in \mathcal{R}$ with $N^+(v) \cap \mathcal{S} \neq \emptyset$,  there exists a vertex $u \in N^+(v) \cap \mathcal{S}$ such that $v$ receives a message $m_u$ from  $u$.
Several variants of $\sr$ are defined as follows.
\begin{description}
\item[All messages: $\allsr$.] The task $\allsr$ requires that each vertex $v \in \mathcal{R}$ receives the message $m_u$  for each $u \in N^+(v) \cap \mathcal{S}$ w.h.p. 
\item[Approximate sum: $\apxsr$.] Suppose the message $m_u$ sent from each vertex $u \in \mathcal{S}$ is an integer within the range $[W]$.
The task $\apxsr$ requires that each vertex $v \in \mathcal{R}$  computes a $(1 \pm \epsilon)$-factor approximation of the summation $\sum_{u \in N^+(v) \cap \mathcal{S}} m_u$ w.h.p.
\item[Minimum and maximum: $\minsr$ and $\maxsr$.] The message $m_u$ sent from each vertex $u \in \mathcal{S}$ contains a key $k_u$ from the key space $[K] = \{1, 2, \ldots, K\}$. For $\minsr$, it is required that w.h.p., each vertex $v \in \mathcal{R}$ with $N^+(v) \cap \mathcal{S} \neq \emptyset$ receives a message $m_u$  from a vertex $u \in N^+(v) \cap \mathcal{S}$ such that $k_u = \min_{u' \in N^+(v) \cap \mathcal{S}} k_{u'}$. The task $\maxsr$ is defined analogously by replacing minimum with maximum.
\item[Multiple messages: $\multisr$.]  Consider the setting where each vertex $u \in \mathcal{S}$ holds a set of messages $\mathcal{M}_u$.
For each message $m$, all vertices holding the same message $m$ have access to some shared random bits associated with $m$.
We assume that for each $v \in \mathcal{R}$, the number of distinct messages in $\bigcup_{u \in N^+(v) \cap \mathcal{S}} \mathcal{M}_u$ is upper bounded by a number $M$ that is known to all vertices.
 The task $\multisr$ requires that each vertex  $v \in \mathcal{R}$ receives all distinct messages in $\bigcup_{u \in N^+(v) \cap \mathcal{S}} \mathcal{M}_u$ w.h.p.
\end{description}

\cref{tab:my_label} summarizes the time and energy complexities of our algorithms for these tasks. For $\allsr$, the parameter $\Delta'$ can be any known upper bound on $|\mathcal{S} \cap N(v)|$, for each $v \in \mathcal{R}$. For example, we may set  $\Delta' = \Delta$ if no better upper bound is known.
The proofs for these results are left to \cref{sect.sr}.
Note that for the special case where $\mathcal{S} \cap \mathcal{R} = \emptyset$ and $|\mathcal{R} \cap N(u)| \leq 1$ for each $u \in \mathcal{S}$, the time complexity of  $\minsr$ and $\maxsr$ can be improved to $O(\log K \log \Delta \log n)$.

\begin{table}[]
\begin{center}
\begin{tabular}{llll}
{\bf Task} & {\bf Time} & {\bf Energy}                                        &                                                                     \\ \cline{1-3}
\multicolumn{1}{|l|}{$\sr$}      & \multicolumn{1}{l|}{$O(\log \Delta \log n)$}                       & \multicolumn{1}{l|}{$O(\log \Delta \log n)$}                         &                                                                     \\ \cline{1-3}
\multicolumn{1}{|l|}{$\allsr$}   & \multicolumn{1}{l|}{$O( \Delta' \log n)$}                          & \multicolumn{1}{l|}{$O( \Delta' \log n)$}                            & $\Delta' \geq |\mathcal{S} \cap N(v)|, \, \forall v \in \mathcal{R}$ \\ \cline{1-3}
\multicolumn{1}{|l|}{$\minsr$}   & \multicolumn{1}{l|}{\multirow{2}{*}{$O(K \log \Delta \log n)$}}    & \multicolumn{1}{l|}{\multirow{2}{*}{$O(\log K \log \Delta \log n)$}} &                                                                     \\ \cline{1-1}
\multicolumn{1}{|l|}{$\maxsr$}   & \multicolumn{1}{l|}{}                                              & \multicolumn{1}{l|}{}                                                &                                                                     \\ \cline{1-3}
\multicolumn{1}{|l|}{$\apxsr$}   & \multicolumn{1}{l|}{$O(\epsilon^{-6} \log W \log \Delta \log n)$} & \multicolumn{1}{l|}{$O(\epsilon^{-6}\log W\log \Delta \log n)$}    &                                                                     \\ \cline{1-3}
\multicolumn{1}{|l|}{$\multisr$} & \multicolumn{1}{l|}{$O(M \log \Delta \log^2 n)$}                   & \multicolumn{1}{l|}{$O(M \log \Delta \log^2 n)$}                     &                                                                     \\ \cline{1-3}
\end{tabular}
\end{center}
    \caption{The time and energy complexities of $\sr$  and its variants.}
    \label{tab:my_label}
\end{table}

\subsection{Communication via a good labeling\label{sect.label}}

A {\em good labeling} is a vertex labeling $\mathcal{L}: V(G) \mapsto \{0, \ldots, n-1\}$ such that each vertex $v$ with $\mathcal{L}(v) > 0$ has a neighbor $u$ with $\mathcal{L}(u)=\mathcal{L}(v)-1$~\cite{ChangDHHLP18}.
A vertex $v$ is called a {\em layer-$i$} vertex if $\mathcal{L}(v) = i$.
If there is a unique layer-0 vertex $r$, then $\mathcal{L}$ represents a tree $T$ rooted at $r$, so we call $r$  the \emph{root} of  $\mathcal{L}$. Since a vertex might have multiple choices of the parent, the tree $T$ is not  unique in general. The following lemma was proved in~\cite{ChangDHHLP18}.

\begin{lemma}[\cite{ChangDHHLP18}] \label{lemma:labeling}
A good labeling $\mathcal{L}$ with a unique layer-0 vertex can be constructed in $O(n \log \Delta \log^2 n)$ time and $O(\log \Delta \log^2 n)$ energy w.h.p. 
\end{lemma}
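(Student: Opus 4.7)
The plan is to piggyback the labeling on top of a multi-hop broadcast from a single source. If no root is designated in advance, I would first elect a leader using an energy-efficient leader election protocol, adaptable to the multi-hop setting via the $\sr$ primitives listed above, well within the claimed bounds. Once $r$ is fixed, I set $\mathcal{L}(r) = 0$ and leave every other vertex's label undefined.

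Next, I would invoke the $\nocd$ broadcasting protocol of Chang et al.~\cite{ChangDHHLP18} from source $r$, which delivers a message to every vertex using $O(\log \Delta \log^2 n)$ energy per vertex in super-linear time. The only modification is to augment the transmitted message with a layer counter $\ell$. When a vertex $v \ne r$ first receives the broadcast message carrying a counter value $\ell$, it commits $\mathcal{L}(v) = \ell$, and all subsequent transmissions made by $v$ (as dictated by the broadcasting schedule) carry the counter value $\ell + 1$. Since the broadcasting protocol treats the message as opaque, this augmentation does not affect its time or energy complexity.

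To see that this yields a good labeling, observe that for every $v \ne r$ the neighbor $u$ from which $v$ first received the message must have already committed its own label by the time of that transmission, and by construction $\mathcal{L}(u) = \mathcal{L}(v) - 1$. Thus $v$ has a neighbor at layer $\mathcal{L}(v) - 1$, and $r$ is the unique layer-$0$ vertex. Since the induced spanning tree has depth at most $n - 1$, all counter values lie in $\{0, \ldots, n-1\}$ as required. The time and energy bounds are inherited from the underlying broadcasting protocol.

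The main obstacle I anticipate is twofold. First, the cited broadcasting result only quotes ``super-linear time'' for this energy complexity without a specific exponent, so to hit the claimed $O(n \log \Delta \log^2 n)$ time bound I would likely have to instantiate a layered variant: repeatedly invoke $\sr$ on the current frontier (each invocation taking $O(\log \Delta \log n)$ time) while using the scheduling ideas of~\cite{ChangDHHLP18} to cap each vertex's total listening at $O(\log^2 n)$ invocations with high probability, so that no vertex ever pays more than $O(\log \Delta \log^2 n)$ energy regardless of its depth. Second, I need to verify that the broadcasting schedule only transmits after committing a label, so that the counter is well-defined at every transmission; this is automatic for any flooding-style protocol, which is precisely the regime the underlying broadcast operates in.
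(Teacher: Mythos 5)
The paper does not prove Lemma~\ref{lemma:labeling} itself; it cites it verbatim from Chang et al.~\cite{ChangDHHLP18}, so there is no internal proof to compare against. Evaluating your reconstruction on its own merits, there is a real gap in the form of a circularity.

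You propose to \emph{obtain} the good labeling by running the broadcast protocol of~\cite{ChangDHHLP18} and annotating its messages with a hop counter. But in that paper (and indeed in this one---see Lemma~\ref{lemma:bc}), the energy-efficient broadcast protocol is built \emph{on top of} the good labeling: one first constructs $\mathcal{L}$, and then performs converge-cast and diverge-cast layer by layer along $\mathcal{L}$. The good labeling is the primitive; energy-efficient broadcast is the application. Using the broadcast as a black box to build the labeling inverts the dependency. The same circularity affects your leader-election step: in this multi-hop model, leader election with $\operatorname{poly}\log n$ energy is obtained as a \emph{corollary} of the labeling (the unique layer-$0$ vertex is the leader), not as a prerequisite obtained independently from $\sr$ alone.

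Your fallback---a layered frontier-growing variant with repeated $\sr$ calls---correctly identifies that the counter-augmentation argument and the $\{0,\ldots,n-1\}$ range are sound whenever the schedule is flooding-style. But the fallback leaves untouched the actual technical obstacle the lemma exists to overcome: in a naive layered flood, an unlabeled vertex does not know at which round its neighbor will first transmit, so it must listen continuously and pays $\Omega(D)$ energy. Saying ``use the scheduling ideas of~\cite{ChangDHHLP18} to cap each vertex's listening at $O(\log^2 n)$ invocations'' delegates precisely the idea that constitutes the proof. Notably, the construction in~\cite{ChangDHHLP18} deliberately produces a labeling that is \emph{not} a BFS labeling in order to make a bounded-listening schedule possible; a ``layered BFS variant'' is exactly the regime they had to abandon. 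So the sketch names the right obstacle but offers no mechanism to surmount it, and the mechanism you lean on (efficient multi-hop broadcast) presupposes the lemma you are trying to prove.
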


The following lemma shows that a good labeling allows the vertices in the graph to broadcast messages in an energy-efficient manner. 

\begin{lemma} \label{lemma:bc}
Suppose that we are given a good labeling $\mathcal{L}$ with a unique layer-0 vertex.  Then we can achieve the following.
\begin{enumerate}
\item It takes  $O(n \Delta \log n)$ time  and $O(\Delta \log n)$ energy for each vertex to broadcast a message to the entire network w.h.p.
\item It takes $O(nx \log \Delta \log^2 n)$ time and $O(x \log \Delta \log^2 n)$ energy for $x$ vertices to broadcast messages to the entire network w.h.p.
\end{enumerate}
\end{lemma}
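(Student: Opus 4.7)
The plan is to interpret the good labeling as a layered broadcast structure with layers $V_0 = \{r\}, V_1, \ldots, V_L$ ($L \leq n-1$), and repeatedly apply the adjacent-layer primitives $\allsr$ and $\sr$ from Section~\ref{sect.sr}. Since every vertex in $V_{i+1}$ has a neighbor in $V_i$ by definition of a good labeling, information can always be forwarded between consecutive layers.

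For part~(1) I would first \emph{convergecast} every vertex's message up to $r$ and then \emph{broadcast down}. For $i = L-1, L-2, \ldots, 0$, invoke $\allsr$ with $\mathcal{S} = V_{i+1}$, $\mathcal{R} = V_i$, and $\Delta' = \Delta$: each layer-$i$ vertex learns the aggregated ``subtree bundle'' of each of its layer-$(i+1)$ neighbors, concatenates these with its own message, and stores the result as its new bundle. After the loop terminates, $r$ holds a bundle containing every vertex's message (possibly with duplicates when a layer-$(i+1)$ vertex has several layer-$i$ neighbors, which is harmless since there is no message-size constraint). I would then apply $\sr$ with $\mathcal{S} = V_i$ and $\mathcal{R} = V_{i+1}$ for $i = 0, 1, \ldots, L-1$: at each step every $V_i$ vertex that has received $r$'s bundle retransmits it, and each $V_{i+1}$ vertex is guaranteed to hear it from one of its layer-$i$ neighbors. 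Each $\allsr$ call costs $O(\Delta \log n)$ time and energy, and each vertex acts as a sender in at most one such call (the $(i \to i{-}1)$ transition, where $i = \mathcal{L}(v)$) and as a receiver in at most one (the $(i{+}1 \to i)$ transition), giving $O(n \Delta \log n)$ time and $O(\Delta \log n)$ energy per vertex for the convergecast. The broadcast-down phase contributes only $O(n \log \Delta \log n)$ time and $O(\log \Delta \log n)$ per-vertex energy, absorbed into the claimed bound.

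For part~(2) I would run $x$ independent broadcasts sequentially, one per source $u_1, \ldots, u_x$. For a single source $u_k$ I would first pipe its message up to $r$: for $i = \mathcal{L}(u_k), \ldots, 1$, invoke $\sr$ with $\mathcal{S}$ equal to the current (unique) layer-$i$ holder and $\mathcal{R} = V_{i-1}$, so that a layer-$(i-1)$ neighbor of the holder receives the message and becomes the new holder. I would then broadcast from $r$ downward using the layer-by-layer $\sr$ routine from part~(1). A single broadcast costs $O(n \log \Delta \log n)$ time, and each vertex participates as transmitter or listener in at most $O(1)$ of its two incident layer boundaries, spending $O(\log \Delta \log n)$ energy per broadcast. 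Multiplying by $x$ yields the claimed $O(nx \log \Delta \log n)$ time and $O(x \log \Delta \log n)$ per-vertex energy.

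The main piece of bookkeeping is verifying that no vertex is drafted into more than $O(1)$ primitive calls per phase: this holds because $V_i$ appears as $\mathcal{S}$ only in the $(i \to i{-}1)$ transition and as $\mathcal{R}$ only in the $(i{+}1 \to i)$ transition. A minor subtlety, already noted, is that a good labeling does not canonically fix a parent-pointer tree, but since we only need $r$ to eventually learn every message (rather than to recover any particular tree), the redundant aggregation is benign, and the analogous redundancy during broadcast-down only helps $\sr$ satisfy its at-least-one-sender precondition at every receiver.
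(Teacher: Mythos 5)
Your part~(1) argument is the same as the paper's: an $\allsr$-based convergecast from the deepest layer down to $r$ in which each vertex forwards a cumulative bundle, followed by an $\sr$-based divergecast, with the key observation that each vertex only participates in the two $\allsr$/$\sr$ calls at its two incident layer transitions. The per-vertex energy and total time analysis matches.

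For part~(2) you take a genuinely different route. The paper offers two algorithms: one uses $\multisr$ with $M = x$ to push all $x$ messages up to $r$ in a single convergecast pass, and the other iterates a plain $\sr$ converge/diverge cycle $x$ times, each iteration guaranteed to deliver at least one not-yet-broadcast message to $r$. You instead handle the sources one at a time, piping each individual message up to $r$ along the good-labeling layers via $\sr$ and then broadcasting it down. This avoids both the shared-randomness bookkeeping of $\multisr$ and the "each pass makes progress on at least one competing message" argument, and it still charges each vertex only $O(1)$ primitive invocations per source, so it lands on the same $O(nx\log\Delta\log n)$ time and $O(x\log\Delta\log n)$ energy. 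One small imprecision: after the first $\sr$ the layer-$(i{-}1)$ holder set is not unique — all of the current holder's layer-$(i{-}1)$ neighbors receive the message and become holders. This is harmless (it's exactly the redundancy you flag at the end, and the set of holders at each layer grows monotonically while each vertex still transmits at most once), but the word \emph{unique} in the description of $\mathcal{S}$ should be dropped; $\mathcal{S}$ is the whole set of current holders at that layer. With that fix, the argument is complete.
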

\begin{proof}
Let $r$ be the root of $\mathcal{L}$.
For the first task, consider the following algorithm. We relay the message of each vertex to the root $r$ using the following {\em convergecast} algorithm.
For $i = n-1$ down to $1$, do  $\allsr$ with $\mathcal{S}$ being the set of all layer-$i$ vertices and $\mathcal{R}$ being the set of all layer-$(i-1)$ vertices. For each execution of  $\allsr$, each vertex in $\mathcal{S}$ transmits not only its message but also all other messages that it has received so far. Although we perform $\allsr$  $n-1$ times, each vertex only participates at most twice. By \cref{lemma:allsr}, the cost of the convergecast algorithm is  $O(n \Delta \log n)$ time  and $O(\Delta \log n)$ energy.

At the end of the convergecast algorithm, the root $r$ has gathered all messages sent during the algorithm. After that, the root $r$ then broadcasts this information to all vertices via the following {\em divergecast} algorithm.
For $i = 0$ to $n - 2$, do  $\sr$ with $\mathcal{S}$ being the set of all layer-$i$ vertices and $\mathcal{R}$ being the set of all layer-$(i+1)$ vertices.
Similarly, although we perform $\sr$ for $n-1$ times, each vertex only participates at most twice. By \cref{lemma:sr},  the cost of the divergecast algorithm is  $O(n \log \Delta \log n)$ time  and $O(\log \Delta \log n)$ energy. At the end of the divergecast algorithm, all vertices have received all messages.

For the rest of the proof, we consider the second task. Let $X$ be the set of $x$ vertices that attempt to broadcast a message.
We solve this task similarly in two steps:
\begin{itemize}
    \item We first do a convergecast, using $\multisr$ with $M=x$, to gather all $x$ messages to the root. By \cref{lemma:multisr},  $\multisr$  costs $O(x\log \Delta \log^2 n)$ time and energy, so the convergecast costs $O(nx \log \Delta \log^2 n)$ time and $O(x \log \Delta \log^2 n)$ energy.
    \item After that, we do a divergecast based on $\sr$ to broadcast these messages from root to everyone. The divergecast  costs  $O(n \log \Delta \log n)$ time  and $O(\log \Delta \log n)$ energy.
\end{itemize}
In order to use $\multisr$, the initial holder of each message $m$ needs to first generate a sufficient number of random bits and attach them to the message. These random bits serve as the shared randomness associated with the message $m$, which is needed in the definition of $\multisr$.
\end{proof}

We make the following observation.

\begin{observation}\label{obs:gathering}
There is an algorithm that lets all vertices learn the entire graph topology in  $O(n \Delta \log n)$ time  and $O(\Delta \log n)$ energy w.h.p.
\end{observation}
\begin{proof}
 We first let each vertex $v$ learn the list of identifiers in $N(v)$ by doing $\allsr$ with $\mathcal{S}=\mathcal{R}=V$, where the message of each vertex $v$ is $\ID(v)$. By \cref{lemma:allsr}, this step takes $O(\Delta \log n)$ time and energy. After that, we apply \cref{lemma:labeling} to construct a good labeling with a unique layer-0 vertex, and then we apply \cref{lemma:bc}(1) to let all vertices learn the entire network topology by having each $v$ broadcasting $\ID(v)$ and the list of identifiers in $N(v)$. This step takes  $O(n \Delta \log n)$ time  and $O(\Delta \log n)$ energy.  
\end{proof}

\section{Graph partitioning}\label{sect:partitioning}
In this section, we consider a classification of the connected components of the subgraph induced by the low-degree vertices in a bounded-genus graph. Our algorithms, which will be presented in subsequent sections, make use of the classification.

Let $G = (V,E)$ be any bounded-genus graph.
Let $V_H$ be the set of vertices that have degree at least $\sqrt{n}$.
Let $V_L = V \setminus V_H$. Since bounded-genus graphs have arboricity $O(1)$, we have $|E| = O(n)$, which implies $|V_H| = O(\sqrt{n})$. 

From now on, we assume $|V_H| \geq 1$, since otherwise $G$ has maximum degree $\Delta \leq \sqrt{n}$, in which case we can already solve {\em all} problems using $O(n \Delta \log \Delta \log n) = \tilde{O}(n^{1.5})$ time and $O(\Delta \log \Delta \log n) = \tilde{O}(\sqrt{n})$ energy by learning the entire graph topology using the algorithm of \cref{obs:gathering}.

Given a set of vertices $S$, we write $G[S]$ to denote the subgraph of $G$ induced by $S$ and write $G^+[S]$ to denote the subgraph  of $G$ induced by all edges that have {\em at least one} endpoint in $S$.
We classify the connected components of $G[V_L]$ into three types. 
\begin{description}
\item[Type 1.] A connected component $S$ of $G[V_L]$ is of type-1 if $|S| < \sqrt{n}$ and $|\bigcup_{w \in S} N(w) \cap V_H| = 1$. For each vertex $u \in V_H$, we write $C(u)$ to denote the set of type-1 components $S$ such that $\bigcup_{w \in S} N(w) \cap V_H = \{u\}$.
\item[Type 2.] A connected component $S$ of $G[V_L]$ is of type-2 if $|S| < \sqrt{n}$ and $|\bigcup_{w \in S} N(w) \cap V_H| = 2$. For each pair of two distinct vertices $\{u,v\} \subseteq V_H$, we write $C(u,v)$ to denote the set of type-2 components $S$ such that $\bigcup_{w \in S} N(w) \cap V_H = \{u,v\}$.
\item[Type 3.] A connected component $S$ of $G[V_L]$ is of type-3 if it is neither of type-1 nor of type-2.
\end{description}
A connected component $S$ of $G[V_L]$ is of type-3 if $|S| \geq \sqrt{n}$ or $|\bigcup_{w \in S} N(w) \cap V_H| \geq 3$. The number of type-3 components $S$ with $|S| \geq \sqrt{n}$ is clearly at most $|V|/\sqrt{n} = \sqrt{n}$. Utilizing the assumption that $G$ is a bounded-genus graph, we  show that the number of type-3 components with $|\bigcup_{w \in S} N(w) \cap V_H| \geq 3$ is also $O(\sqrt{n})$.

\begin{lemma}\label{lem:euler}
Let $G=(V,E)$ be a bipartite graph of genus at most $g$. Let  $V = X \cup Y$ be the bipartition of $G$.
If $\deg(v) \geq 3$ for each $v \in X$, then $|X| \leq 2|Y| + 4(g-1)$.
\end{lemma}
\begin{proof}
Consider any embedding of $G$ into a surface of genus $g$, and let $F$ be the set of faces of the embedded graph.
In a bipartite graph, each face has at least four edges, and each edge appears in at most two faces, so  $|E| \geq 2|F|$.
Combining this inequality with Euler's polyhedral formula $|V|-|E|+|F| \geq 2 - 2g$, we obtain that \[2|V| - |E| \geq 4(1-g).\]
Since $\deg(v) \geq 3$ for each $v \in X$, we have $|E| \geq 3|X|$, so
\[2|V| - |E| = 2(|X| +|Y|) - |E| \leq 2(|X| +|Y|) - 3|X| = 2|Y| - |X|.\]
Combining these upper and lower bounds of $2|V| - |E|$, we obtain that $2|Y| - |X| \geq 4(1-g)$, so $|X| \leq 2|Y| + 4(g-1)$, as required.
\end{proof}

 \cref{lem:euler} is precisely the reason that our algorithms only apply to bounded-genus graphs and do not work on an arbitrary $H$-minor-free graph.
 Consider a complete bipartite graph with the bipartition $X \cup Y$ such that $|Y| = 3$. Such a graph does not contain $K_5$ as a minor, regardless of the size of $X$. Therefore,  $K_5$-minor-freeness does not allow us to upper bound $|X|$ by any function of $|Y|$. 
Therefore, the bounded-genus requirement in \cref{lem:euler} cannot be relaxed to $H$-minor-freeness for an arbitrary $H$.

\begin{lemma}\label{lem:type3}
If $G$ is a bounded-genus graph, then the number of type-3 components is $O(\sqrt{n})$.
\end{lemma}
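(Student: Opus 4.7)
The plan is to partition the type-3 components into two disjoint sub-cases according to which of the two defining conditions (large size, or many $V_H$-neighbors) they violate, and bound each sub-case separately by $O(\sqrt n)$.

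First I would observe that $|V_H| = O(\sqrt{n})$: bounded genus forces $|E(G)| = O(n)$, and every vertex of $V_H$ contributes $\sqrt n$ to the degree sum. Then I would rewrite the definition: a component $S$ of $G[V_L]$ is type-3 iff either (a) $|S| > \sqrt{n}$, or (b) $|S| \leq \sqrt{n}$ and $|\bigcup_{w\in S} N(w)\cap V_H| \notin \{1,2\}$. Because $G$ is connected and $V_H \neq \emptyset$, every component of $G[V_L]$ has at least one neighbor in $V_H$, so (b) actually reduces to $|\bigcup_{w\in S} N(w)\cap V_H| \geq 3$. The components in (a) are pairwise disjoint subsets of $V_L$ each of size greater than $\sqrt n$, so there are at most $n/\sqrt n = \sqrt n$ of them; this case is done.

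For (b), I would build an auxiliary bipartite simple graph $H$ with sides $X = \{S : S \text{ is a sub-case (b) component}\}$ and $Y = V_H$, where $S \in X$ is joined to $v \in Y$ iff some $w \in S$ satisfies $v \in N(w)$. By construction $\deg_H(S) \geq 3$ for every $S \in X$. The graph $H$ is obtained from $G$ by (i) deleting all vertices not belonging to $V_H$ or to a sub-case (b) component, (ii) deleting the edges inside $V_H$, and (iii) contracting each sub-case (b) component to a single vertex, discarding resulting self-loops and parallel edges. Vertex deletion, edge deletion, and edge contraction each do not increase the genus, so $H$ has genus at most $g = O(1)$; bipartiteness holds because distinct components of $G[V_L]$ share no edges, so step (iii) introduces no $X$–$X$ edges. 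Lemma~\ref{lem:euler} then gives $|X| \leq 2|Y| + 4(g-1) = O(|V_H|) = O(\sqrt n)$.

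Adding the two bounds yields $O(\sqrt n)$ type-3 components. The main obstacle is really bookkeeping rather than a genuine difficulty: I need to argue carefully that after the sequence of deletions and contractions the resulting auxiliary graph is both (simple) bipartite and of genus at most $g$, so that Lemma~\ref{lem:euler} genuinely applies. The fact that components of $G[V_L]$ are vertex-disjoint maximal connected subgraphs is what rules out $X$–$X$ edges; the fact that contracting a connected subgraph preserves embeddability on a genus-$g$ surface is what controls the genus; and the fact that Euler's inequality used in Lemma~\ref{lem:euler} is insensitive to discarded self-loops and parallel edges is what lets us pass to a simple graph without changing the bound.
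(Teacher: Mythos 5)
Your proof is correct and follows essentially the same approach as the paper: split the type-3 components into a part bounded by a simple counting/pigeonhole argument and a part bounded by Lemma~\ref{lem:euler} applied to a contracted bipartite auxiliary graph of bounded genus. The only cosmetic difference is the choice of partition — you split by size ($|S|>\sqrt n$ vs.\ $|S|\le\sqrt n$), whereas the paper splits by number of $V_H$-neighbors ($\le 2$ vs.\ $\ge 3$); both partitions work because a type-3 component with at most two $V_H$-neighbors is automatically large, and your more careful verification that the auxiliary graph is a simple bipartite minor of $G$ is a nice explicit touch that the paper leaves implicit.
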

\begin{proof}
A connected component $S$ of $G[V_L]$ is of type-3 if $|S| \geq \sqrt{n}$ or $|\bigcup_{w \in S} N(w) \cap V_H| \geq 3$.
As discussed earlier, the number of type-3 components $S$ with $|S| \geq \sqrt{n}$ is at most $\sqrt{n}$, so we just need to prove that
the number of type-3 components $S$ with $|\bigcup_{w \in S} N(w) \cap V_H| \geq 3$ is also $O(\sqrt{n})$.
Consider a bipartite graph $G^\ast=(V^\ast, E^\ast)$ with the bipartition $V^\ast = X^\ast \cup Y^\ast$ defined as follows.
\begin{itemize}
    \item $X^\ast$ is the set of all  type-3 components $S$ such that $|\bigcup_{w \in S} N(w) \cap V_H| \geq 3$.
    \item $Y^\ast = V_H$.
    \item For each component $S \in X^\ast$ and each vertex $v \in Y^\ast$, $\{S,v\} \in E^\ast$ if $v \in \bigcup_{w \in S} N(w)$.
\end{itemize}
Alternatively, $G^\ast$ can be constructed from $G$ by the following steps.
\begin{itemize}
    \item Remove all type-1, type-2, and   type-3 components $S$ with $|\bigcup_{w \in S} N(w) \cap V_H| \leq 2$.
    \item For each type-3 component $S$  with $|\bigcup_{w \in S} N(w) \cap V_H| \geq 3$, contract $S$ into a vertex.
\end{itemize}

As $G^\ast$ can be obtained from $G$ via a sequence of edge contractions and vertex removals, $G^\ast$ is a bounded-genus graph.  Observe that $\deg(S) \geq 3$ for each $S \in X^\ast$ in $G^\ast$, so we may apply 
 \cref{lem:euler}, which shows that the number $|X^\ast|$ of type-3 components $S$ such that $|\bigcup_{w \in S} N(w) \cap V_H| \geq 3$ satisfies
$|X^\ast| \leq 2|Y^\ast| + O(1) = 2|V_H| + O(1) = O(\sqrt{n})$.
\end{proof}

We write $G_H$ to denote the graph defined by the vertex set $V_H$ and the edge set $\{\{u,v\} \, : \, |C(u,v)| > 0\}$.
The following observation is useful.

\begin{observation}\label{lem:type2}
If $G$ is a bounded-genus graph, then $G_H$ is also a bounded-genus graph, so the number of edges in $G_H$ is $O(\sqrt{n})$ and there exists an edge orientation of $G_H$ such that each vertex has outdegree $O(1)$.
\end{observation}
\begin{proof}
The graph $G_H$ can be obtained from $G$ via a sequence of edge contractions and vertex removals, so $G_H$ is a bounded-genus graph. As bounded-genus graphs have arboricity $O(1)$,  so the number of edges in $G_H$ is at most linear in the number of vertices in $G_H$, which is $O(\sqrt{n})$, and we can orient the edges of $G_H$ in such a way that each vertex has outdegree $O(1)$.
\end{proof}

\section{Diameter\label{sect.planar}}
In this section,   we show that for bounded-genus graphs, the diameter can be computed using $\tilde{O}(\sqrt{n})$ energy. 
We begin with discussing the high-level proof idea.
First of all, using \cref{lemma:bc}, learning the entire graph topology of the subgraph induced by $V_H$ and all type-3 components is doable using $\tilde{O}(\sqrt{n})$ energy. Intuitively, this is due to the following facts: (i) $|V_H| = O(\sqrt{n})$, (ii) $\deg(v) = O(\sqrt{n})$ for each $v \in V_L$, and (iii) the number of type-3 components is $O(\sqrt{n})$.

The main difficulty in the diameter computation is dealing with type-1 and type-2 components.  For example, a vertex $u \in V_H$ can be connected to  $\Theta(n)$ type-1 components in that $|C(u)| = \Theta(n)$. Since we aim for an algorithm with energy complexity $\tilde{O}(\sqrt{n})$,  throughout the entire algorithm, $u$ can only receive messages from at most $\tilde{O}(\sqrt{n})$ components in $C(u)$. The challenge here is to show that the diameter can still be calculated with a limited amount of information about type-1 and type-2 components and show that such information can be extracted in an energy-efficient manner in the radio network model.

We will define a set of parameters of type-1 and type-2 components and  show that with  these parameters, the exact value of the diameter can be calculated. Based on this result, we will define a subgraph $G^\star$ of $G$ such that the diameter of $G$ equals the diameter of $G^\star$, and then we will design an energy-efficient algorithm to learn the graph topology of $G^\star$.
In the subsequent discussion, we write $\ecc(u,S)$ to denote $\max_{v \in S} \dist(u,v)$. By default, all distances are measured in the underlying network $G$. We use subscripts to describe distances that are measured in a vertex set, an edge set, or a subgraph.

\paragraph{Parameters for type-1 components.}
We first consider the type-1 components in $C(u)$, for  any vertex $u \in V_H$.
\begin{description}
\item [{$(A_i[u], a_i[u])$.}] Let $A_1[u]$ be a component $S \in C(u)$ that maximizes $\ecc(u,S)$, and let $A_2[u]$ be a component $S \in C(u) \setminus \{A_1[u]\}$ that maximizes $\ecc(u,S)$.
For $i \in \{1,2\}$, we write $a_i[u] = \ecc(u,A_i[u])$.

\item [{$(B[u], b[u])$.}] Let $B[u]$ be a component $S \in C(u)$ that maximizes $\max_{s, t \in S \cup \{u\}} \dist(s,t)$,
and we write $b[u] = \max_{s, t \in B[u] \cup \{u\}} \dist(s,t)$.
\end{description}

In the above definitions, ties can be broken arbitrarily if there are multiple choices.
Some of the above definitions become undefined when $|C(u)|$  is too small. For example, if $|C(u)| = 1$, then $A_2[u]$ and $a_2[u]$ are undefined. In such a case, we set these parameters to their default values: zero or an empty set.
For example, if $|C(u)| = 1$, then we set $A_2[u] = \emptyset$ and $a_2[u] = 0$.

For each $u \in V_H$, any path connecting a vertex in $\bigcup_{S \in C(u)} S$  to a vertex outside of $\bigcup_{S \in C(u)} S$ must pass through  vertex $u$, so the amount of information we can afford to extract from $\bigcup_{S \in C(u)} S$ is limited. Intuitively, for the purpose of calculating the diameter, we only need  the following information from $\bigcup_{S \in C(u)} S$:
\begin{itemize}
\item The longest distance between two vertices in $\bigcup_{S \in C(u)} S \cup \{u\}$, which is $\max\{b[u], a_1[u]+a_2[u]\}$.
\item The longest distance between $u$ and a vertex in $\bigcup_{S \in C(u)} S$, which is $a_1[u]$.
\end{itemize}
Regardless of the size of $C(u)$, we only need to learn  $a_1[u]$, $a_2[u]$, and $b[u]$ from the components of $C(u)$. Later we will show that these parameters can be learned efficiently via $\maxsr$.

\paragraph{Parameters for type-2 components.}
Next, we consider the type-2 components in $C(u,v)$, for any two distinct vertices $u, v \in V_H$.
\begin{description}
\item [{$(R[u,v], r[u,v])$.}]   Let $R[u,v]$ be a component $S \in C(u,v)$ that minimizes $\dist_{G^+[S]}(u,v)$, and we write $r[u,v] = \dist_{G^+[R[u,v]]}(u,v)$.
    In other words, $R[u,v]$  is  a component that contains a shortest path between $u$ and $v$, among all $u$--$v$ paths via the vertices in $\bigcup_{S \in C(u,v)} S$.

\item [{$(A_i^k[u,v], a_i^k[u,v])$.}]
For each component $S \in C(u,v)$, we write $S^{u,k}$ to denote the set of vertices $\{w \in S \, : \, \dist_{G^+[S]}(w,v) - \dist_{G^+[S]}(w,u) \geq k\}$.
In other words, $S^{u,k}$ is  the set of all vertices in $S$ whose distance to $u$ in  $G^+[S]$ is shorter than that to $v$ by {\em at least} $k$.

Let $A_1^k[u,v]$ be a component $S \in C(u,v)$ that maximizes $\ecc_{G^+[S]}(u, S^{u,k})$, and
let $A_2^k[u,v]$ be a component $S \in C(u,v) \setminus \{A_1^k[u,v]\}$ that maximizes $\ecc_{G^+[S]}(u, S^{u,k})$.
We write $a_i^k[u,v] = \ecc_{G^+[A_i^k[u,v]]}(u,A_i^k[u,v])$.
We only consider $k \in \{-\sqrt{n}, \ldots, \sqrt{n}\}$.

\item [{$(B^l[u,v], b^l[u,v])$.}]
For a component $S \in C(u,v)$, we write $G^{l}[S]$ to denote the graph resulting from adding to $G^+[S]$ a path of length $l$ connecting $u$ and $v$, and we write $\phi^{l}(S)$ to denote the maximum value of  $\dist_{G^{l}[S]}(s,t)$ among all pairs of vertices $s, t \in S \cup \{u,v\}$.
A useful observation here is that if  $\dist_{V \setminus S}(u,v) = l$, then $\phi^{l}(S)$ equals the maximum value of $\dist_G(s,t)$ among all pairs of vertices $s, t \in S \cup \{u,v\}$.

Let $B^l[u,v]$ be a component $S \in C(u,v) \setminus \{ R[u,v] \}$ that maximizes $\phi^{l}(S)$,
and we write $b^l(u,v) = \phi^{l}(B^l[u,v])$.
We only consider $l \in \{1, \ldots, \sqrt{n}\}$.
\end{description}

Similar to the parameters of type-1 components, all the above parameters are set to their default values if they are undefined.
Note that the definitions of  $a_i^k[u,v]$ and  $A_i^k[u,v]$ are {\em asymmetric} in the sense that we might have $a_i^k[u,v] \neq a_i^k[v,u]$ and  $A_i^k[u,v] \neq A_i^k[v,u]$. All remaining parameters for type-2 components are symmetric.

We briefly explain how the above parameters can be  used in the diameter calculation.
Let $P=(s, \ldots, t)$ be an $s$--$t$ shortest path in $G$ whose length equals the diameter.
There are three possible ways that $P$  intersects the vertex set $\bigcup_{S \in C(u,v)} S$.
\begin{itemize}
\item Suppose the two endpoints $s$ and $t$ are within $G^+[S]$, for a component $S \in C(u,v)$. In this case, if $\dist_{V \setminus S}(u,v) = l$, then the length of $P$ equals $\phi^{l}(S) = b^l[u,v]$.
\item Suppose there is a subpath $P' = (u, \ldots, v)$ of $P$ whose intermediate vertices are all in $\bigcup_{S \in C(u,v)} S$. In this case, the length of $P'$ equals $r[u,v]$.
\item  Suppose there is a component $S \in C(u,v)$ such that $s \in S$ and $t \notin S \cup \{u,v\}$. Suppose $u$ is the first vertex of $P$ that is not in $S$.
Consider the subpath $P' = (s, \ldots, u)$  of $P$. If $\dist(t,u) - \dist(t,v) = k$, then we must have  $s \in S^{u,k}$, since otherwise $\dist(s,v) + \dist(v,t)$ is smaller than the length of $P$, violating the assumption that $P$ is an $s$--$t$ shortest path. If $t \notin A_1^k[u,v]$, then 
the length of $P'$ equals $a_1^k[u,v]$.
If $t \in A_1^k[u,v]$, then the length of $P'$ equals $a_2^k[u,v]$.
\end{itemize}

Intuitively, the above discussion shows that the parameters described above capture all the necessary information needed to be extracted from the type-2 components for the purpose of diameter computation.
We have $O(\sqrt{n})$ parameters for each $C(u,v)$. We will later show that all these parameters can  be learned using $O(\sqrt{n})$ energy.

\paragraph{The graph $G^\star$.}
We define  $G^\star$ as the subgraph induced by the union of (i) $V_H$, (ii) all type-3 components,
(iii) $A_1[u]$, $A_2[u]$, and $B[u]$, for all $u \in V_H$, and
(iv) $A_i^k[u,v]$, $A_i^k[v,u]$, $B^l[u,v]$, and $R[u,v]$, for all pairs of distinct vertices $\{u,v\} \subseteq V_H$,
$i \in \{1,2\}$,   $k \in \{-\sqrt{n}, \ldots, \sqrt{n}\}$, and  $l \in \{1, \ldots, \sqrt{n}\}$. 
In the subsequent discussion, we prove that the diameter of $G$ equals the diameter of $G^\star$, so the task of computing the diameter of $G$ is reduced to learning the topology of $G^\star$.
We will show that the following two statements are correct.
\begin{description}
\item [(S1)] For each pair of vertices $\{s,t\}$ in  graph $G^\star$, we have $\dist_G(s,t) = \dist_{G^\star}(s,t)$. 
\item [(S2)] For each pair of vertices $\{s,t\}$ in  graph $G$,
there exists a pair of vertices $\{s',t'\}$ in  graph $G^\star$
satisfying $\dist_G(s,t) \leq  \dist_{G}(s',t')$.
\end{description}
These statements imply that $G$ and $G^\star$ have the same diameter.
We first prove that (S1) is true.

\begin{lemma}\label{lem:cal-aux}
For any two vertices $s$ and $t$ in $G^\star$, we have  $\dist_{G}(s,t) = \dist_{G^\star}(s,t)$.
\end{lemma}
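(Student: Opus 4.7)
The containment $V(G^\star) \subseteq V(G)$ with inherited edge set makes $\dist_G(s,t) \leq \dist_{G^\star}(s,t)$ automatic, so the real content is the reverse inequality. My plan is to fix an arbitrary shortest $s$--$t$ path $P$ in $G$ and show that it can be transformed, arc by arc, into a walk of length at most $|P|$ that lies entirely in $G^\star$. To this end, I would cut $P$ at every occurrence of a vertex in $V_H$, producing a sequence of \emph{arcs} whose endpoints all lie in $V_H \cup \{s,t\}$ and whose internal vertices, by connectivity, all lie in a single connected component $S$ of the subgraph induced by $V_L$.

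The next step is a case analysis on the type of $S$. If $S$ is type-3, then $S \subseteq V(G^\star)$ by construction and the arc is already in $G^\star$. If $S \in C(u)$ is type-1, the only possible $V_H$-endpoint is $u$; an arc with both endpoints at $u$ can be shortcut at $u$, contradicting minimality of $P$, so every remaining type-1 arc has $s$ or $t$ as an endpoint inside $S$. Since $s, t \in V(G^\star)$, the only way a vertex of $S$ is in $V(G^\star)$ is for $S$ to have been designated as one of $A_1[u]$, $A_2[u]$, $B[u]$, and then $S \subseteq V(G^\star)$ puts the arc inside $G^\star$. For $S \in C(u,v)$ of type-2, arcs with an endpoint in $S \cap \{s,t\}$ are handled in exactly the same way, using that $S$ must coincide with one of $A_i^k[u,v]$, $A_i^k[v,u]$, $B^l[u,v]$, or $R[u,v]$. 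This leaves a single remaining subcase: an arc whose $V_H$-endpoints are $u$ and $v$; after shortcutting the $u$-to-$u$ and $v$-to-$v$ variants by minimality of $P$, I would replace the arc with a shortest $u$--$v$ path in $G[R[u,v]]$, which has length $r[u,v] \leq \dist_{G[S]}(u,v)$ and lies in $G^\star$ because $R[u,v] \subseteq V(G^\star)$ by definition.

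The hard part will be precisely this last substitution, since it is the only case that actually invokes the defining property of one of the extracted parameters rather than the easy observation that an endpoint of $P$ inside a small component $S \in V(G^\star)$ forces all of $S$ into $V(G^\star)$. Its correctness relies exactly on the minimality baked into the definition of $R[u,v]$; everything else is bookkeeping. Iterating the rerouting across all arcs of $P$ produces a walk from $s$ to $t$ inside $G^\star$ of length at most $|P| = \dist_G(s,t)$, giving $\dist_{G^\star}(s,t) \leq \dist_G(s,t)$ and completing the argument.
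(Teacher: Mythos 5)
Your proposal is correct, and the essential mechanism is the same as the paper's: the only nontrivial rerouting is through type-2 components, and it is justified by the minimality built into $R[u,v]$. The packaging is different, though. The paper's proof is an extremal argument: it picks a shortest $s$--$t$ path $P$ that minimizes the number of vertices outside $G^\star$, exhibits a single bad subpath through a type-2 component $S \notin G^\star$, replaces it by the $u$--$v$ route through $R[u,v]$, and derives a contradiction with the minimality of $P$. Notably the paper jumps straight to ``the bad subpath lies in a type-2 component'' without spelling out why a type-1 component $S \in C(u)$ not in $G^\star$ cannot contribute: such an $S$ has a unique attachment vertex $u$, so a traversal of it by a shortest path would be a $u$-to-$u$ detour, and $s,t \in G^\star$ rules out the case that an endpoint sits inside $S$. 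Your decomposition-into-arcs argument makes that reasoning explicit, and it also surfaces the (harmless but worth noting) point that the rerouted object is only a walk, not a path, which still bounds $\dist_{G^\star}(s,t)$. The tradeoff: the paper's extremal phrasing is shorter and avoids worrying about repeated use of the same $R[u,v]$; your arc-by-arc substitution is more transparent about the casework and closes the small gap about type-1 components. Both are correct and rest on the same key fact $r[u,v] \leq \dist_{G[S]}(u,v)$ for every $S \in C(u,v)$.
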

\begin{proof}
We choose $P$ to be an $s$--$t$ path in $G$ whose length is $\dist_{G}(s,t)$ that uses the minimum number of vertices not in $G^\star$.
If $P$ is entirely in $G^\star$, then we are done.
For the rest of the proof, we assume that $P$ is not entirely in $G^\star$.
Then  $P$ contains a subpath  $P' = (u, \ldots, v)$ whose intermediate vertices are all within a type-2 component $S \in  C(u, v)$ that is not included to $G^\star$.
By the definition of $R[u,v]$, the length of $P'$ is at least $r[u,v]$, which is the shortest path length between $u$ and $v$ via $R[u,v]$.
Therefore, replacing $P'$ with a shortest $u$--$v$ path in $R[u,v]$, which is entirely in $G^\star$, does not increase the path length.
This contradicts our choice of $P$.
Hence $P$ is entirely in $G^\star$.
\end{proof}

\begin{lemma}\label{lem:cal-aux3}
Let $S$ be a type-1 or type-2 component that is not included in $G^\star$.
Let $s$ be any vertex in $S$.
Let $t$ be any vertex in $G$ that does not belong to $G^+[S]$.
Then there exists a vertex $s'$ in $G^\star$  such that $\dist_{G}(s',t) \geq \dist_{G}(s,t)$.
\end{lemma}
\begin{proof}
Let $P$ be an $s$--$t$ shortest path in $G$.
Suppose that $S \in C(u)$ is of type-1.
Because $S$ is not included in $G^\star$, we must have $|C(u)| \geq 3$, so both $A_1[u] \neq S$ and $A_2[u] \neq S$ are not $\emptyset$.
Let $i \in \{1,2\}$ be an index such that $t$ is not in $A_i[u]$.
Consider the subpath $\tilde{P} = (s, \ldots, u)$  of $P$.
By the definition of  $a_i[u]$ and  $A_i[u]$, the length of $\tilde{P}$ is at most $a_i[u]$, and there exists a vertex $s' \in A_i[u]$ such that the length of the shortest path between $s'$ and $u$ equals $a_i[u]$. Thus, we have
$$\dist_{G}(s',t) = \dist_{G}(s',u) + \dist_{G}(u,t) \geq \dist_{G}(s,u) + \dist_{G}(u,t) = \dist_{G}(s,t).$$

Next, consider the case that $S \in C(u,v)$ is of type-2.
The path $P$ must contain at least one of $u$ and $v$.
Without loss of generality, assume that $u$ is the first vertex of $P$ that is not in $S$, so there is a subpath $\tilde{P} = (s, \ldots, u)$  of $P$ such that all vertices in $\tilde{P}$ other than $u$ are in $S$. The length of $P$ equals $\dist_{G^+[S]}(s,u) + \dist_{G}(u,t)$.

Let $k = \dist_{G^+[S]}(s,v) - \dist_{G^+[S]}(s,u)$, so $S^{u,k} \supseteq \{s\} \neq \emptyset$. 
Since $S$ is not of type-3, $|S| < \sqrt{n}$, so $k \in \{-\sqrt{n}, \ldots, \sqrt{n}\}$.
Because $S$ is not included in $G^\star$,  both $A_1^k[u,v] \neq S$ and $A_1^k[u,v] \neq S$ are not $\emptyset$.
At least one of $A_1^k[u,v]$ and $A_2^k[u,v]$ does not contain $t$. We choose $S' = A_i^k[u,v]$ as any one of them that does not contain $t$. We choose $s' \in S'$ as a vertex such that $\dist_{G^+[S']}(s',u) = a_i^k[u,v]$ and $\dist_{G^+[S']}(s',v) - \dist_{G^+[S']}(s',u) \geq k$. The existence of such a vertex $s'$ is guaranteed by the definition of  $A_i^k[u,v]$.

Our plan is to show that  (i) $a_i^k[u,v] + \dist_{G}(u,t) \geq \dist_{G}(s,t)$ and (ii) $\dist_{G}(s',t) = a_i^k[u,v] + \dist_{G}(u,t)$. Combining these two inequalities give us the desired result: $\dist_{G}(s',t) \geq \dist_{G}(s,t)$.

\paragraph{Proof of (i).} By the definition of  $A_i^k[u,v]$, we must have \[\dist_{G^+[S']}(s',u) = a_i^k[u,v]  \geq \dist_{G^+[S]}(s,u),\]  so we have \[a_i^k[u,v] + \dist_{G}(u,t) \geq \dist_{G^+[S]}(s,u) + \dist_{G}(u,t) = \dist_{G}(s,t).\]

\paragraph{Proof of (ii).}
Suppose that (ii) is not true. Then any shortest path between $s'$ and $t$ must contain a subpath $P' = (s', \ldots, v)$ such that $u$ is not in $P'$, and so we have:
$$\dist_{G}(s',t) = \dist_{G^+[S']}(s',v) + \dist_{G}(v,t) < \dist_{G^+[S']}(s',u) + \dist_{G}(u,t).$$
Combining this inequality with the known fact $\dist_{G^+[S']}(s',v) - \dist_{G^+[S']}(s',u) \geq k$, we have:
$$\dist_{G}(u,t) - \dist_{G}(v,t) > \dist_{G^+[S']}(s',v) - \dist_{G^+[S']}(s',u) \geq k,$$
which implies that $\dist_{G}(v,t) < \dist_{G}(u,t) - k$~($\star$).
We calculate an upper bound of $\dist_{G}(s,t)$:
\begin{align*}
\dist_{G}(s,t) &\leq \dist_{G^+[S]}(s,v) + \dist_{G}(v,t)\\
&= (k + \dist_{G^+[S]}(s,u)) + \dist_{G}(v,t) &\text{by definition of $k$.}\\
&< (k + \dist_{G^+[S]}(s,u)) + (\dist_{G}(u,t) - k) &\text{by ($\star$).}\\
&= \dist_{G^+[S]}(s,u) + \dist_{G}(u,t).
\end{align*}
This contradicts the assumption that $P$ is a shortest path between $s$ and $t$ in $G$, as the length of $P$ equals $\dist_{G^+[S]}(s,u) + \dist_{G}(u,t)$.
\end{proof}

The following lemma shows that (S2) is true.

\begin{lemma}\label{lem:cal-aux2}
For any two vertices $s$ and $t$ in graph $G$,
there exist two vertices $s'$ and $t'$ in  graph $G^\star$
such that $\dist_G(s,t) \leq  \dist_{G}(s',t')$.
\end{lemma}
\begin{proof}
If both  $s$ and $t$ are already in $G^\star$, then we are done by setting $s' = s$ and $t' = t$.
In the subsequent discussion,  we focus on the case that at least one of $s$ and $t$ is not in $G^\star$. 
By symmetry, we assume that $s$ is not in $G^\star$, so there is a type-1 or a type-2 component $S$ that is not included in $G^\star$ such that $s \in S$.

\paragraph{Case 1: $t$ belongs to $G^+[S]$.} 
If $S \in C(u)$ for some $u \in V_H$, then there exist two vertices $s'$ and $t'$ in the component $B[u] \in C(u)$ such that $\dist_G(s',t') = b[u] \geq \dist_G(s,t)$ by the definition of $B[u]$.

The remaining case is where $S \in C(u,v)$ for some $u,v \in V_H$. Let $l = \dist_{V \setminus S}(u,v)$.
We observe that $l \leq r[u,v]$.  The reason is that the existence of a component $S \neq R[u,v]$  guarantees that $R[u,v] \neq \emptyset$, which implies that \[l = \dist_{V \setminus S}(u,v) \leq  \dist_{G^+[R[u,v]]}(u,v) = r[u,v],\] as $G^+[R[u,v]]$ is a subgraph of $G[V \setminus S]$.

Since $R[u,v]$ is of type-2, we have $r[u,v] \leq |R[u,v]|+1  \leq \sqrt{n}$, so $l \in \{1, \ldots, \sqrt{n}\}$
Consider the component $B^l[u,v] \in C(u,v)$.
We observe that $l = \dist_{V \setminus B^l[u,v]}(u,v)$, since the shortest $u$--$v$ path length via $R[u,v]$ is at most the length of any $u$--$v$ path via $S$ or $B^l[u,v]$, by our choice of $R[u,v]$. More precisely, we have:
\[l = \dist_{V \setminus S}(u,v) = \dist_{V}(u,v) =
\dist_{V \setminus B^l[u,v]}(u,v),\]
as the above discussion implies that including $S$ and excluding  $B^l[u,v]$ in the subscript does not change the shortest $u$--$v$ path length. Here we use the fact that $B^l[u,v] \neq R[u,v]$, which is due to the definition of $B^l[u,v]$.

Since $l = \dist_{V \setminus B^l[u,v]}(u,v)$, by the definition of $B^l[u,v]$, there exist two vertices $s'$ and $t'$ in $G^+[B^l[u,v]]$ such that $\dist_{G}(s',t') \geq \dist_{G}(s,t)$, since otherwise we would have selected $B^l[u,v] = S$.

\paragraph{Case 2: $t$ does not belong to $G^+[S]$.}  
We apply \cref{lem:cal-aux3} to find a vertex $s'$ in $G^\star$ such that $\dist_G(s,t) \leq  \dist_{G}(s',t)$.
If $t$ is already in $G^\star$, then we are done.
Otherwise, there is a type-1 or a type-2 component $S'$ that is not included in $G^\star$ such that $t \in S'$. There are two sub-cases.
\begin{itemize}
    \item Suppose $s'$  belongs to $G^+[S']$. Then we may apply the same argument for Case~1 above to find two vertices $s''$ and $t''$ in $G^\star$ such that $\dist_G(s,t) \leq  \dist_{G}(s',t) \leq \dist_G(s'', t'')$.
    \item  Suppose $s'$  does not belong to $G^+[S']$. Then we may apply \cref{lem:cal-aux3} again to find a vertex $t'$ in $G^\star$ such that $\dist_G(s,t) \leq  \dist_{G}(s',t) \leq \dist_{G}(s',t')$.
\end{itemize}
In both sub-cases, we find two vertices in $G^\star$ whose distance in $G$ is at least
 $\dist_G(s,t)$.
\end{proof}

We are now ready to prove that the diameter of $G$ equals the diameter of $G^\star$.

\begin{lemma}\label{lem:cal-diam}
The diameter of $G$ equals the diameter of $G^\star$.
\end{lemma}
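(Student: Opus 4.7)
The plan is to prove $\diam(G) = \diam(G^\star)$ by establishing two claims: (I) $G^\star$ is \emph{distance-preserving}, in the sense that $\dist_{G^\star}(s, t) = \dist_G(s, t)$ for every $s, t \in V(G^\star)$; and (II) some diameter-realizing pair of $G$ already lies in $V(G^\star)$. Then (I) gives $\diam(G^\star) \leq \diam(G)$, while (II) combined with (I) gives the reverse inequality.

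For (I), since $G^\star$ is an induced subgraph of $G$, the inequality $\dist_{G^\star}(s, t) \geq \dist_G(s, t)$ is immediate. For the reverse direction, I will take a shortest $s$--$t$ path $P$ in $G$ and rewrite it as a path of equal length lying in $G^\star$. Partition $P$ into maximal segments, each sitting either in $V_H$, inside a type-3 component, inside a type-1 component, or inside a type-2 component. Segments in $V_H$ and in type-3 components already lie in $G^\star$. A segment in a type-1 component must be terminal, since such a component has a unique $V_H$-boundary and $P$ is shortest; and because $s, t \in V(G^\star)$, the enclosing type-1 component lies in $V(G^\star)$, so the segment is preserved. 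A non-terminal segment inside a type-2 component $S \in C(u, v)$ is necessarily a $u$--$v$ shortcut; by definition of $R[u, v] \subseteq V(G^\star)$ we have $\dist_{G[R[u,v]]}(u, v) \leq \dist_{G[S]}(u, v)$, so replacing this shortcut by one through $R[u, v]$ does not increase $|P|$.

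For (II), fix any $s, t$ with $\dist_G(s, t) = D := \diam(G)$ and relocate each of $s, t$ into $V(G^\star)$ while maintaining $\dist_G \geq D$; the diameter bound then forces equality after each step. Three relocation rules suffice, matching the three scenarios sketched just before the lemma. (a) If $x \in \{s, t\}$ lies in a type-1 component $S \in C(u)$ and the shortest path exits $S$ through $u$, replace $x$ by an eccentricity-realizing vertex of $A_1[u]$; the new distance equals $a_1[u] + \dist_G(u, \text{other}) \geq D$. (b) If $x$ lies in a type-2 component $S \in C(u, v)$ and the shortest path exits through $u$, set $k = \dist_G(u, \text{other}) - \dist_G(v, \text{other})$; the triangle inequality together with $\dist_G(u, v) \leq \dist_{G[R[u,v]]}(u, v) \leq \sqrt n$ gives $|k| \leq \sqrt n$, and the exit condition is equivalent to $x \in S^{u, k}$. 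Replace $x$ by an extremal vertex $x' \in A_1^k[u, v]^{u, k}$ attaining distance $a_1^k[u, v]$ from $u$ in $G[A_1^k[u, v]]$ (falling back to $A_2^k[u, v]$ if the other endpoint happens to lie in $A_1^k[u, v]$); the membership $x' \in A_1^k[u, v]^{u, k}$ guarantees that from the other endpoint the route via $u$ is no longer than the route via $v$, yielding $\dist_G(x', \text{other}) \geq a_1^k[u, v] + \dist_G(u, \text{other}) \geq D$. (c) The residual cases in which both endpoints lie in $\bigcup_{S' \in C(u)} S' \cup \{u\}$ or inside the same type-2 component $S \in C(u,v) \setminus \{R[u,v]\}$ are dispatched by the $b[u]$ and $b^l[u, v]$ parameters (the latter with $l = \dist_{V \setminus S}(u, v) \leq \dist_{G[R[u, v]]}(u, v) \leq \sqrt n$), exactly as sketched in the excerpt.

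The trickiest step will be branch (b): one must simultaneously verify that the offset $k$ lies in the tabulated range $\{-\sqrt n, \ldots, \sqrt n\}$ and that the asymmetry built into $S^{u, k}$ prevents $x'$ from being reached from the other endpoint via a shorter route through the ``wrong'' boundary $v$. The definitions of $S^{u, k}$ and $A_1^k[u, v]$ are tailored precisely for this purpose, but the bulk of the technical work will be the case analysis, across every sub-case, pinning down the exit vertex, the sign of $k$, and whether to use $A_1^k$ or $A_2^k$, especially when both endpoints lie outside $V(G^\star)$ and must be relocated sequentially (so that relocating the first endpoint can shift the $k$ used when relocating the second).
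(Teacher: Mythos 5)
Your plan — (I) show $G^\star$ is isometric in $G$, (II) relocate a diametral pair into $V(G^\star)$ without decreasing distance — is exactly the (S1)/(S2) decomposition the paper uses (its Lemmas on $\dist_{G} = \dist_{G^\star}$ and on relocating endpoints via $A_i[u]$, $A_i^k[u,v]$, $B[u]$, $B^l[u,v]$). The case structure of your part (II) also matches. The one genuine divergence is the choice of the offset $k$ in the type-2 branch. The paper takes $k = \dist_{G[S]}(s,v) - \dist_{G[S]}(s,u)$, intrinsic to $s$'s position inside $S$ and independent of $t$; with that $k$, showing that the shortest $s'$--$t$ path actually exits through $u$ (rather than $v$) requires a proof by contradiction. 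You take $k = \dist_G(u,t) - \dist_G(v,t)$, which makes the exit-through-$u$ step immediate: $x' \in (A_1^k[u,v])^{u,k}$ directly rearranges to $\dist_{G[A_1^k[u,v]]}(x',v)+\dist_G(v,t) \geq \dist_{G[A_1^k[u,v]]}(x',u)+\dist_G(u,t)$, forcing the exit via $u$ and giving $\dist_G(x',t) = a_1^k[u,v] + \dist_G(u,t) \geq \dist_{G[S]}(s,u) + \dist_G(u,t) = \dist_G(s,t)$. That is a cleaner derivation for this one step, at two costs you should flag. First, your range bound is slightly loose: $|k| \leq \dist_G(u,v) \leq r[u,v] \leq |R[u,v]|+1 \leq \sqrt n + 1$, one unit beyond the tabulated $\{-\sqrt n, \ldots, \sqrt n\}$, whereas the paper's intrinsic $k$ always falls in $[1-\sqrt n, \sqrt n - 1]$; this is fixable by widening the table, but it would need to be done. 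Second, your $k$ is entangled with $t$, so when both endpoints must be relocated the second relocation uses a $k$ determined by the already-relocated $s'$; you are right that this works, but it prevents phrasing the relocation as a clean stand-alone lemma applied twice (as the paper does with its one-endpoint relocation lemma), and the bookkeeping — exit vertex, sign of $k$, which of $A_1^k/A_2^k$ avoids the other endpoint — has to be threaded through both applications in sequence. Both routes succeed; yours trades one contradiction for a marginally wider parameter table and a more stateful two-step relocation.
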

\begin{proof}
\cref{lem:cal-aux} shows that (S1) is true.
 \cref{lem:cal-aux2} shows that (S2) is true.
 These two results together imply that $G$ and $G^\star$ have the same diameter. Statement (S1) implies that the diameter of $G^\star$ is at most the diameter of $G$. For the other direction, let $s$ and $t$ be two vertices in $G$ such that $\dist(s,t)$ equals the diameter of $G$. By (S2), there exist two vertices $s'$ and $t'$ in $G^\star$
such that $\dist_G(s,t) \leq  \dist_{G}(s',t')$. By (S1), $\dist_G(s',t') = \dist_{G^\star}(s',t')$, so the diameter of $G^\star$ is at least the diameter of $G$.
\end{proof}

\subsection{Learning the topology of \texorpdfstring{$G^\star$}{Gstar}\label{sect:algo-details}}

By \cref{lem:cal-diam}, the task of computing the diameter of a bounded-genus graph $G$ is reduced  to computing the diameter of $G^\star$.
In this section, we show that  all  vertices can learn the graph topology of $G^\star$ using $\tilde{O}(\sqrt{n})$ energy.

Recall that $G_H$ is the graph defined by the vertex set $V_H$ and the edge set $\{\{u,v\} \, : \, |C(u,v)| > 0\}$. By \cref{lem:type2}, we know that $E(G_H) = O(\sqrt{n})$ and
 there exists an assignment $F: E(G_H) \mapsto V_H$ mapping each pair $\{u,v\} \in E(G_H)$ to one vertex in $\{u,v\}$ such that each $w \in V_H$ is mapped to at most $O(1)$ times.
Let $\mathcal{A}'$ be any deterministic centralized algorithm that finds such an assignment $F$, and we fix $F^\star$ to be the outcome of $\mathcal{A}'$ on the input $G_H$. If each vertex $v \in V$ already knows the graph $G_H$, then $v$ can locally calculate $F^\star$ by simulating $\mathcal{A}'$.

To learn  $G^\star$, we will let each vertex $u \in V$ learn the following information:
\begin{description}
\item[Basic information $\mathcal{I}_0(u)$.]
 For each vertex $u \in V$, $\mathcal{I}_0(u)$ contains the following information: (i) whether $u \in V_H$ or $u \in V_L$, (ii) the list of vertices in $N(u) \cap V_H$, and (iii)  the set of all pairs $\{u',v'\} \in E(G_H)$.

If $u$ is in a connected component $S$ of $G[V_L]$, then $\mathcal{I}_0(u)$ contains the following additional information:
(i) the list of vertices in $S$, and (ii) the topology of the subgraph $G^+[S]$.

\item[Information about type-1 components $\mathcal{I}_1(u)$.]
For each $u \in V_H$,  $\mathcal{I}_1(u)$ contains the  graph topology of $G^+[S']$, for each $S' = A_1[u]$, $A_2[u]$, and $B[u]$.

\item[Information about type-2 components $\mathcal{I}_2(u)$.]
For each $u \in V_H$, $\mathcal{I}_2(u)$ contains the following information.
For each pair $\{u,v\}  \in E(G_H)$ such that $F^\star(\{u,v\}) = u$,
$\mathcal{I}_2(u)$ includes the  graph topology of $G^+[S']$,  for each $S' = A_i^k[u,v]$, $A_i^k[v,u]$, $B^l[u,v]$, and $R[u,v]$, for each
$i \in \{1,2\}$,  $k \in \{-\sqrt{n}, \ldots, \sqrt{n}\}$, and  $l \in \{1, \ldots, \sqrt{n}\}$.
\end{description}

Information $\mathcal{I}_0(u)$ contains the graph topology of $G_H$, allowing each vertex $u$ to calculate $F^\star$ locally. 
Note that $\mathcal{I}_1(u)$ and $\mathcal{I}_2(u)$ contain nothing if $u \in V_L$.
The following lemma shows that the graph topology of $G^\star$ can be learned efficiently given that  each vertex $u \in V$ already knows  $\mathcal{I}_0(u)$, $\mathcal{I}_1(u)$, and $\mathcal{I}_2(u)$.

\begin{lemma}\label{lem:Gstar}
Given that each $u \in V$ already knows $\mathcal{I}_0(u)$, $\mathcal{I}_1(u)$, and $\mathcal{I}_2(u)$,
using $\tilde{O}(n^{1.5})$ time and  $\tilde{O}(\sqrt{n})$ energy, we can let all vertices in $G$ learn the graph topology of $G^\star$ w.h.p.
\end{lemma}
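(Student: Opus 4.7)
The plan is to reduce the problem to a single bulk broadcast by a small set of ``broadcasters'' chosen so that together they already possess every topology that $G^\star$ needs. First I would invoke Lemma~\ref{lemma:labeling} to build a good labeling of $G$ rooted at some vertex; this costs $O(n \log \Delta \log^2 n)$ time and $O(\log \Delta \log^2 n)$ energy, which is absorbed in the $\tilde O(n^{1.5})$ and $\tilde O(\sqrt n)$ budgets. The broadcasters will be the set $B^\star = V_H \cup \{\, r_S : S \text{ is a type-3 component}\,\}$, where $r_S$ is the minimum-ID vertex of $S$. Every vertex can locally decide whether it is in $B^\star$ using only $\mathcal{I}_0$: membership in $V_H$ is part of $\mathcal{I}_0$, and any $v\in V_L$ knows $G[S]$ for its component $S$ (including all IDs in $S$ and $|N(w)\cap V_H|$ for $w\in S$), so it can both classify $S$ as type-3 and determine whether it is the ID-minimum in $S$.

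Next, I would have each broadcaster pack all the relevant topologies it already holds into a single payload; since the model has no message size constraint, this costs nothing extra. A vertex $u\in V_H$ sends $\mathcal{I}_0(u)\cup\mathcal{I}_1(u)\cup\mathcal{I}_2(u)$, which encodes $N(u)\cap V_H$, the full topology of $G[A_1[u]]$, $G[A_2[u]]$, $G[B[u]]$, and, for every $\{u,v\}\in E(G_H)$ assigned to $u$ by $F^\star$, the topologies of $G[A_i^k[u,v]]$, $G[A_i^k[v,u]]$, $G[B^l[u,v]]$, $G[R[u,v]]$ over the relevant $i,k,l$. A type-3 representative $r_S$ sends $G[S]$ (which it already has from $\mathcal{I}_0$). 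I then apply Lemma~\ref{lemma:bc}(2) with the broadcaster count
\[
x=|V_H|+\#\{\text{type-3 components}\}=O(\sqrt n),
\]
where $|V_H|=O(\sqrt n)$ follows from $|E(G)|=O(n)$ for bounded-genus graphs and the $O(\sqrt n)$ bound on type-3 components is Lemma~\ref{lem:type3}. This yields $O(nx\log\Delta\log n)=\tilde O(n^{1.5})$ time and $O(x\log\Delta\log n)=\tilde O(\sqrt n)$ energy, as required. After the broadcast, every vertex reconstructs $V(G^\star)$ and $E(G^\star)$ by taking the union of the vertex/edge sets contained in the received payloads.

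The only real subtlety — and the reason this argument is genus-sensitive rather than purely combinatorial — is keeping the broadcaster count $x$ at $O(\sqrt n)$. Bounded genus enters twice: once to cap $|V_H|$ via $|E(G)|=O(n)$, and once (through Lemma~\ref{lem:type2} and Lemma~\ref{lem:type3}) to cap the number of type-3 components. Crucially, we do not need any broadcaster inside a type-1 or type-2 component: all the distance information we require about such components is already aggregated at $V_H$ endpoints in $\mathcal{I}_1$ and $\mathcal{I}_2$, and the assignment $F^\star$ derived from Lemma~\ref{lem:type2} guarantees each $u\in V_H$ is responsible for only $O(1)$ pairs in $E(G_H)$, so packing $\mathcal{I}_2(u)$ into a single message does not inflate the number of broadcasters. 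With this accounting, the costs of Lemma~\ref{lemma:bc}(2) match the claimed bounds.
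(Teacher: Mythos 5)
Your proof is correct and follows essentially the same route as the paper: you pick $V_H$ together with one representative $r_S$ per type-3 component as the $O(\sqrt{n})$ broadcasters, have each pack its known topologies into one message, and invoke Lemma~\ref{lemma:bc}(2) with $x = O(\sqrt n)$, using $|V_H| = O(\sqrt n)$ and Lemma~\ref{lem:type3} exactly as the paper does. Your explicit invocation of Lemma~\ref{lemma:labeling} and the local-decidability check for membership in the broadcaster set are slightly more spelled out than the paper's version, but are the same argument.
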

\begin{proof}
To learn $G^\star$, it suffices to know the following information: (i)  $\mathcal{I}_1(u)$ and $\mathcal{I}_2(u)$ for each $u \in V_H$,  (ii) the graph topology of $G^+[S]$ for each type-3 component $S$, and (iii) the graph topology of the subgraph induced by $V_H$.
For each type-3 component $S$, let $r_S$ be the smallest ID vertex in $S$.
In view of the above, to let each vertex learn the topology of $G^\star$, it suffices to let the following $O(\sqrt{n})$ vertices broadcast the following information:
\begin{itemize}
\item For each $u \in V_H$, $u$ broadcasts $\mathcal{I}_1(u)$, $\mathcal{I}_2(u)$, and the list of vertices $N(u) \cap V_H$, which is contained in $\mathcal{I}_0(u)$.

\item For each $u \in V_L$ such that $u = r_S$ for a type-3 component $S$, $u$ broadcasts the graph topology of $G^+[S]$. Note that each vertex $u \in V_L$ can decide locally using the information in $\mathcal{I}_0(u)$ whether or not $u$ itself is $r_S$ for a type-3 component $S$.
\end{itemize}
 Since $|V_H| = O(\sqrt{n})$ and the number of type-3 components is also $O(\sqrt{n})$ by \cref{lem:type3},
 the number of vertices that has a message to broadcast is $O(\sqrt{n})$.
 We run the algorithm of \cref{lemma:labeling} to find a good labeling $\mathcal{L}$ of $G$, and then we  use \cref{lemma:bc}(2) with $x = O(\sqrt{n})$ to let the above $O(\sqrt{n})$  vertices broadcast their information. This can be done in time  $\tilde{O}(n^{1.5})$ and   energy  $\tilde{O}(\sqrt{n})$. After that, all vertices know the graph topology of $G^\star$.
\end{proof}

Next, we consider the task of learning the basic information $\mathcal{I}_0(u)$.

\begin{lemma}\label{lem:HL}
Using $\tilde{O}(\sqrt{n})$ time and energy, we can let all vertices $v\in V$ learn the following information w.h.p.
\begin{itemize}
    \item Each $v \in V$ learns  whether $v \in V_H$ or $v \in V_L$.
    \item If $v \in V_H$, then $v$ also learns the list of vertices in $N(v) \cap V_H$.
    \item If $v \in V_L$, then $v$ also learns the two lists of vertices $N(v) \cap V_L$ and  $N(v) \cap V_H$.
\end{itemize}
\end{lemma}
\begin{proof}
First, we run $\apxsr$ with $W=1$, $\epsilon = 1/2$, $\mathcal{S}=\mathcal{R}=V$, and $m_u = 1$, for each $u \in \mathcal{S}$. This step lets each $v \in V$ estimate $\deg(v)$ up to a factor of 2. This step costs $\poly\log n$ time, by \cref{lemma:apxsr}.

After that, we run $\allsr$ with $\mathcal{S} = V$ and  $\mathcal{R}$ being the set of all vertices $v$ whose estimate of $\deg(v)$ is at most $2 \sqrt{n}$. The message $m_v$ for each vertex $v$ is $\ID(v)$, and we use the bound  $\Delta' = 4\sqrt{n}$ for $\allsr$.
Recall that $V_L$ is the set of vertices of degree at most $\sqrt{n}$, so we must have $V_L \subseteq \mathcal{R}$.
The algorithm of $\allsr$ allows each vertex $v \in \mathcal{R}$ to calculate $\deg(v)$ precisely.
 Therefore, after this step, each vertex $v \in V$  has enough information to decide whether $v \in V_H$ or $v \in V_L$. Furthermore, if $v \in V_L$, then $v$ knows the list of all vertices $N(v)$. This step takes  $\tilde{O}(\sqrt{n})$ time, by \cref{lemma:allsr}.

In order for each vertex to learn all the required vertex lists, we  run $\allsr$ again with the following parameters:
 $\mathcal{S} = V_H$,  $\mathcal{R} = V$, and the message $m_v$ for each vertex $v \in \mathcal{S}$ is its $\ID(v)$.
This time we may use the bound  $\Delta' = \sqrt{n} \geq |V_H|$.  After the algorithm of $\allsr$, each vertex $v \in V$ knows  the list of vertices in $N(v) \cap V_H$.   For each $v \in V_L$, since $v$ already knows the list of all vertices $N(v)$, it can locally calculate the list  $N(v) \cap V_L$.  This step also takes  $\tilde{O}(\sqrt{n})$ time.
\end{proof}

\begin{lemma}\label{lem:S}
Using $\tilde{O}(n^{1.5})$ time and  $\tilde{O}(\sqrt{n})$ energy, we can let all vertices $v$ in all connected components $S$ of $G[V_L]$ learn (i) the vertex set $S$ and (ii) the graph topology of $G^+[S]$ w.h.p.
\end{lemma}
\begin{proof}
First, we apply \cref{lem:HL} to let all vertices $v \in V_L$ learn the two lists  $N(v) \cap V_L$ and  $N(v) \cap V_H$.
To let all vertices learn the required information in the lemma statement, it suffices to let each  vertex $v \in S$ broadcast the two lists $N(v) \cap V_L$ and  $N(v) \cap V_H$ to all other vertices in $S$, for all connected components $S$ of $G[V_L]$.

We do the above broadcasting task in parallel, for all connected components $S$ of $G[V_L]$.
We use \cref{lemma:labeling} to let each component $S$ compute a good labeling, and then we use \cref{lemma:bc}(1) to let each vertex $v \in S$  broadcast the two lists  $N(v) \cap V_L$ and  $N(v) \cap V_H$ to all other vertices in $S$.
Recall that the degree of any vertex in $V_L$ is less than $\sqrt{n}$, so the algorithm of \cref{lemma:bc}(1) costs
$\tilde{O}(n^{1.5})$ time and  $\tilde{O}(\sqrt{n})$ energy.
\end{proof}

For each connected component $S$ of $G[V_L]$, at the end of the algorithm of \cref{lem:S}, each vertex $w \in S$ is able to determine the type of $S$. If $S$ is of type-1, $w$ knows the vertex $u \in V_H$ such that $S \in C(u)$. If $S$ is of type-2,  $w$  knows the two vertices $u, v \in V_H$ such that $S \in C(u,v)$. Given such information, in the following lemma, we design an algorithm for learning the topology of $G_H$.

\begin{lemma}\label{lem:pairs}
Suppose that each vertex in each type-2 component $S$ already knows (i) the vertex set $S$ and (ii) the graph topology of $G^+[S]$.
Using $\tilde{O}(n^{1.5})$ time and $\tilde{O}(\sqrt{n})$ energy, all vertices in the graph can learn the set of all pairs $\{u,v\} \in E(G_H)$ w.h.p.
\end{lemma}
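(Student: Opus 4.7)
The plan is to reduce the task to three pieces: (a) disseminate the set $V_H$ to every vertex, (b) let each high-degree vertex $u\in V_H$ learn its $G_H$-neighborhood $N_{G_H}(u)$, and (c) disseminate these neighborhoods to all vertices. Since $|V_H|=O(\sqrt{n})$ and $|E(G_H)|=O(\sqrt{n})$ by Lemma~\ref{lem:type2}, both (a) and (c) reduce to two invocations of Lemma~\ref{lemma:bc}(2) with $x=O(\sqrt{n})$, each costing $\tilde{O}(n^{1.5})$ time and $\tilde{O}(\sqrt{n})$ energy: in (a) each $u\in V_H$ broadcasts its own ID, and in (c) each $u\in V_H$ broadcasts the list $N_{G_H}(u)$ computed in (b). Taking the union of these lists gives $E(G_H)$ at every vertex.

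The core of the argument is (b). I would iterate over all unordered pairs $\{u,v\}\subseteq V_H$ in the lexicographic order of their IDs---a schedule that is globally well-defined once (a) has been carried out. For the pair $\{u,v\}$ (say $u<v$) I run a single $\sr$ call with $\mathcal{R}=\{u\}$ and $\mathcal{S}$ equal to the set of $w\in V_L$ that (i) sit in a type-2 component $S\in C(u,v)$ and (ii) are adjacent to $u$. Each such $w$ already knows the pair $\{u_S,v_S\}$ and the local topology of $G[S]$ by hypothesis, so it decides locally whether to participate in the current round, and $u$ concludes $v\in N_{G_H}(u)$ iff it hears any message. The budget works out because each $u\in V_H$ is the receiver in exactly $|V_H|-1=O(\sqrt{n})$ calls, incurring $\tilde{O}(\sqrt{n})$ energy overall; each $w\in V_L$ lies in at most one type-2 component and thus is active in at most one call, incurring $\tilde{O}(1)$ energy; and the total of $\binom{|V_H|}{2}=O(n)$ calls takes $\tilde{O}(n)$ time, which is absorbed by (a) and (c).

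The main obstacle, and the reason one cannot skip (a), is the synchronization of the iteration in (b): without a globally known set $V_H$, the receiver $u$ would not know which round corresponds to which candidate $v$, and a sender $w$ would have to guess when to transmit, potentially forcing vertices of $V_H$ to listen in $\omega(\sqrt{n})$ rounds or forcing low-degree vertices to participate in many spurious rounds. Disseminating $V_H$ up front produces a consistent global ordering of the $O(n)$ candidate pairs, after which the locality hypothesis on type-2 components makes the per-pair $\sr$ call trivially schedulable and yields the stated $\tilde{O}(n^{1.5})$-time, $\tilde{O}(\sqrt{n})$-energy bounds.
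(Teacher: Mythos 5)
Your proposal is correct and follows essentially the same three-step outline as the paper: (a) disseminate $V_H$ via Lemma~\ref{lemma:labeling} and Lemma~\ref{lemma:bc}(2) to fix a global ordering, (b) have each $u\in V_H$ learn its $G_H$-adjacencies through a scheduled sequence of $\sr$ calls driven by the type-2 component representatives, and (c) broadcast the results via Lemma~\ref{lemma:bc}(2).

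The one place you genuinely diverge is the granularity of the schedule in (b). You invoke one $\sr$ per unordered pair $\{u,v\}\subseteq V_H$, for $\binom{|V_H|}{2}=O(n)$ calls and $\tilde O(n)$ total time. The paper instead iterates over $V_H$ itself, one call per $v_i$, each time setting $\mathcal{R}=V_H$ (not a singleton) and letting $\mathcal{S}$ be a distinguished representative in every type-2 component incident to $v_i$: each such representative is adjacent to the other endpoint and announces $v_i$'s presence to it. That batching brings step (b) down to $O(\sqrt n)$ calls and $\tilde O(\sqrt n)$ time. Both variants are dominated by the $\tilde O(n^{1.5})$ broadcast steps, so the claimed bounds hold either way, but the paper's batching is strictly cheaper. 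Two small precision points worth fixing: since you always make the smaller-ID endpoint the receiver, a given $u\in V_H$ is the receiver in exactly $|\{v\in V_H: v>u\}|\le |V_H|-1$ calls (not ``exactly $|V_H|-1$''), and consequently $u$ ends step (b) knowing only its larger-ID $G_H$-neighbors; this is fine because the union over all $u$ of the broadcast lists still recovers $E(G_H)$, but you should say that explicitly rather than assert that each $u$ ``computes $N_{G_H}(u)$''.
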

\begin{proof}
First of all, we let all vertices in $V_H$ agree on a fixed ordering $V_H = \{v_1, \ldots, v_{|H|}\}$ as follows.
We use \cref{lemma:labeling} to compute a good labeling of $G$, and then we use \cref{lemma:bc}(2) with $x = \sqrt{n}$ to let each vertex $v \in V_H$ broadcast $\ID(v)$. After that, we may order $V_H = \{v_1, \ldots, v_{|H|}\}$ by increasing ordering of $\ID$. This step takes $\tilde{O}(n^{1.5})$ time and $\tilde{O}(\sqrt{n})$ energy.

Next, we consider the task of letting each $u \in V_H$ learn the list of all $v \in V_H$ such that $C(u,v) \neq \emptyset$.
We solve this task by $|V_H|$ invocations of $\sr$.
Given a type-2 component $S \in C(u,v)$,  we define $z_{u,S}$ as the smallest-ID vertex in $N(v) \cap S$.
The vertex $z_{u,S}$ will be responsible for letting $v$  know that  $C(u,v) \neq \emptyset$.
For $i = 1$ to $|V_H|$, we do an $\sr$ with $\mathcal{R} = V_H$  and  $\mathcal{S}$ being the set of all vertices  $z_{v_i,S}$ such that $S$ is a type-2 component with $v_i \in G^+[S]$. Observe that a vertex $u \in V_H$ receives a message during the $i$th iteration if and only if $C(u, v_i)\neq \emptyset$, i.e., $\{u, v_i\} \in E(G_H)$.
By \cref{lemma:sr}, this step takes $|V_H| \cdot \poly \log n = \tilde{O}(\sqrt{n})$ time.

At the end of the above algorithm, each $u \in V_H$ knows the list of all $v \in V_H$ such that $C(u,v) \neq \emptyset$. In order to let all vertices in $G$ learn the topology of $G_H$, it suffices to let all $u \in V_H$ broadcast this information.
This can be done using   \cref{lemma:bc}(2) with $x = \sqrt{n}$, which costs $\tilde{O}(n^{1.5})$ time and $\tilde{O}(\sqrt{n})$ energy.
\end{proof}

\begin{lemma}\label{lem:learn0}
In $\tilde{O}(n^{1.5})$ time and $\tilde{O}(\sqrt{n})$ energy, we can let all $u \in V$ learn $\mathcal{I}_0(u)$ w.h.p.
\end{lemma}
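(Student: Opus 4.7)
The plan is to simply assemble $\mathcal{I}_0(u)$ by invoking the three lemmas developed in this section in sequence, each of which already fits within the $\tilde{O}(n^{1.5})$ time and $\tilde{O}(\sqrt{n})$ energy budget.

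First I would apply Lemma~\ref{lem:HL}. After this step, each vertex $u$ knows whether $u \in V_H$ or $u \in V_L$, knows $N(u) \cap V_H$, and (if $u \in V_L$) also knows $N(u) \cap V_L$. This handles items (i) and (ii) of $\mathcal{I}_0(u)$ at a cost of $\tilde{O}(\sqrt{n})$ time and energy.

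Next I would apply Lemma~\ref{lem:S}, which in $\tilde{O}(n^{1.5})$ time and $\tilde{O}(\sqrt{n})$ energy ensures that every vertex $u$ lying in a component $S$ of $V_L$ learns both the vertex list of $S$ and the topology of $G[S]$, thereby supplying the additional information required for $V_L$-vertices. In particular, after this step every vertex in each type-2 component already possesses the preconditions needed to invoke Lemma~\ref{lem:pairs}.

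Finally, I would invoke Lemma~\ref{lem:pairs} to let every vertex in $V$ learn the set $\{\{u',v'\} : \{u',v'\} \in E(G_H)\}$, again in $\tilde{O}(n^{1.5})$ time and $\tilde{O}(\sqrt{n})$ energy; this supplies item (iii) of $\mathcal{I}_0(u)$. Since all three invocations run within the stated budget and their total is still $\tilde{O}(n^{1.5})$ time and $\tilde{O}(\sqrt{n})$ energy, the conclusion follows. No real obstacle arises here: the work has already been done in the preceding lemmas, and this lemma serves mainly as a bookkeeping statement packaging them together. The only mild subtlety is to check that the preconditions of Lemma~\ref{lem:pairs} (knowledge of each type-2 component's topology by its own vertices) are indeed established by Lemma~\ref{lem:S} before Lemma~\ref{lem:pairs} is called, which is immediate from the ordering above.
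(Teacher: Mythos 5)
Your proposal is correct and matches the paper's approach; the paper simply states ``This follows from Lemma~\ref{lem:S} and Lemma~\ref{lem:pairs}.'' Your explicit mention of Lemma~\ref{lem:HL} as a separate first step is a harmless unpacking, since Lemma~\ref{lem:S}'s proof already begins by invoking Lemma~\ref{lem:HL}, which supplies items (i) and (ii) of $\mathcal{I}_0(u)$ for all vertices, including those in $V_H$.
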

\begin{proof}
This follows from \cref{lem:S} and \cref{lem:pairs}.
\end{proof}

Next, we consider the task of learning  $\mathcal{I}_1(u)$ and  $\mathcal{I}_2(u)$.

\begin{lemma}\label{lem:learn12}
Suppose that  each $v \in V$  knows $\mathcal{I}_0(v)$.
Using $\tilde{O}(n^{1.5})$ time and  $\tilde{O}(\sqrt{n})$ energy, we can let all vertices $u \in V_H$ learn $\mathcal{I}_1(u)$ and  $\mathcal{I}_2(u)$ w.h.p.
\end{lemma}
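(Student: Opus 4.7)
The plan is to apply the generic $\maxsr$/$\minsr$ template described just above to each parameter defining $\mathcal{I}_1(u)$ and $\mathcal{I}_2(u)$, parallelizing across all $u \in V_H$ wherever that is safe and otherwise iterating over the edges of $G_H$.

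As a preparatory observation, once every $w \in V$ holds $\mathcal{I}_0(w)$, no further communication is needed for bookkeeping: each $w$ can locally decide whether it lies in $V_H$ or in a specific $V_L$-component $S$; determine the type of $S$ and the (at most two) elements of $\bigcup_{x \in S} N(x) \cap V_H$; compute $F^\star$ by running the deterministic algorithm $\mathcal{A}'$ on the locally held $G_H$; decide whether it plays the role $r_{S,u}$ for each relevant $u$; and compute every number used as a key or payload --- $\ecc(u,S)$, $\dist_{G[S]}(u,v)$, $\phi^l(S)$, membership of a vertex in $S^{u,k}$, and so on --- directly from the topology of $G[S]$.

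For $\mathcal{I}_1(u)$ I would run three global primitives (one each for $A_1[u]$, $A_2[u]$, $B[u]$) with $\mathcal{R} = V_H$ and $\mathcal{S} = \bigcup_{u \in V_H} \{r_{S,u} : S \in C(u)\}$; each payload carries the topology of $G[S]$ together with $\ID(r_{S,u})$, and each key is the appropriate scalar in $\{0,\dots,\sqrt{n}\}$, giving $K = O(\sqrt{n})$. Parallelizing across $V_H$ is safe here because a type-1 representative has only one high-degree neighbor, so different listeners never contend for the same sender; exclusion of $A_1[u]$ before running the $A_2[u]$ primitive is handled by a single global $\allsr$ with $\Delta' = O(1)$ in which each $u$ notifies its type-1 representatives of $\ID(r_{A_1[u],u})$. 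For $\mathcal{I}_2(u)$ I would then iterate over the $O(\sqrt{n})$ edges of $G_H$ in sequence, and for each edge $\{u,v\}$ with $F^\star(\{u,v\}) = u$ run $O(\sqrt{n})$ primitives with $\mathcal{R} = \{u\}$ and $\mathcal{S} = \{r_{S,u} : S \in C(u,v)\}$ --- one call per parameter among $R[u,v]$, $A_i^k[u,v]$, $A_i^k[v,u]$, and $B^l[u,v]$ over all relevant $(i,k)$ and $l$ --- using keys in $\{0,\dots,\sqrt{n}\}$ so that $K = O(\sqrt{n})$ throughout. Exclusion-based parameters such as $B^l[u,v]$ (which forbids $R[u,v]$) are implemented exactly as in the $\mathcal{I}_1$ phase.

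Each primitive costs $O(K \log \Delta \log n) = \tilde{O}(\sqrt{n})$ time and $O(\log K \log \Delta \log n) = \tilde{O}(1)$ energy at every participating vertex. The $\mathcal{I}_1$ phase then contributes only $\tilde{O}(\sqrt{n})$ time in total, while the $\mathcal{I}_2$ phase contributes $O(\sqrt{n}) \cdot O(\sqrt{n}) \cdot \tilde{O}(\sqrt{n}) = \tilde{O}(n^{1.5})$ time across the $O(\sqrt{n})$ edges and $O(\sqrt{n})$ parameters per edge; every vertex participates in $O(1)$ edges of $G_H$ (each $V_L$-vertex lies in a single component, and each $u \in V_H$ is an endpoint of $O(1)$ pairs assigned to it by $F^\star$ from Lemma~\ref{lem:type2}), so the per-vertex energy is $\tilde{O}(\sqrt{n})$. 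The main point that actually needs checking is the sender-receiver adjacency used by the primitives: every $r_{S,u}$ placed in $\mathcal{S}$ is by construction a neighbor of $u$, so the $\maxsr$/$\minsr$ guarantee applies directly, and the per-edge serialization in the $\mathcal{I}_2$ phase is exactly what prevents cross-contamination through a representative that happens to be adjacent to both endpoints of a shared edge of $G_H$.
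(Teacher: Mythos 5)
Your proof is correct and follows essentially the same approach as the paper: apply the generic $\maxsr$/$\minsr$ template (with keys in $\{0,\dots,\sqrt{n}\}$, payloads carrying the topology of $G[S]$, and exclusion handled by a preliminary notification round) to each of the $O(n)$ parameters, and observe that each vertex is a sender or listener in at most $O(\sqrt{n})$ of them. The only cosmetic differences are that you parallelize the $\mathcal{I}_1$ phase across $V_H$ (safe, as you note, because type-1 representatives have a unique $V_H$-neighbor, which also puts you in the special case of $\minsr$/$\maxsr$) and that you charge each primitive the general-case runtime $O(K\log\Delta\log n)=\tilde{O}(\sqrt{n})$ instead of the special-case $O(\log K\log\Delta\log n)=\tilde{O}(1)$ that applies since $\mathcal{R}$ is a singleton (or, in the $\mathcal{I}_1$ phase, each sender has one $\mathcal{R}$-neighbor); the paper uses the tighter bound and gets $\tilde{O}(n)$ time, but your looser $\tilde{O}(n^{1.5})$ still matches the lemma's stated bound.
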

\begin{proof}
 Consider any vertex $u \in V_H$. 
For each component $S \in C(u)$, we let $r_{S,u}$ be the smallest-ID vertex in the set $S \cap N(u)$.
For each $v \in V_H$ such that $F^\star(\{u,v\}) = u$, and for each  component $S \in C(u,v)$, we similarly let $r_{S,u}$ be the smallest-ID vertex in the set $S \cap N(u)$. 
As we will later see, $r_{S,u}$ will be the vertex in $S$ responsible for sending the graph topology $G^+[S]$ to $u$ in case $G^+[S]$ belongs to $\mathcal{I}_1(u)$ or  $\mathcal{I}_2(u)$. 

Recall that $\mathcal{I}_1(u)$ and  $\mathcal{I}_2(u)$ consist of the graph topology  $G^+[S']$ of some selected type-1 and type-2 components $S'$ such that $u$ belongs to $G^+[S']$.
We will present a generic approach that lets $u \in V_H$ learn one graph topology in $\mathcal{I}_1(u)$ and  $\mathcal{I}_2(u)$. As we will later see, the cost of learning one graph topology is $\poly \log n$ time and energy. If the graph topology to be learned is in $C(u)$, then only $u$ and the vertices $r_{S,u}$ for all $S \in C(u)$ need to participate in the algorithm for learning the graph topology. If the graph topology to be learned is in $C(u,v)$, then only $u$ and the vertices $r_{S,u}$ for all $S \in C(u,v)$ need to participate in the algorithm for learning the graph topology.
We only describe the algorithms that let $u \in V_H$ learn  $A_1[u]$ and $A_2[u]$. The algorithms for learning the remaining graph topologies are analogous.

\begin{description}
    \item[{Learning $A_1[u]$.}] Recall that $A_1[u]$ is a component $S' \in C(u)$ that maximizes $\ecc(u,S')$.
To learn  $A_1[u]$, we use $\maxsr$ with $\mathcal{S} =  \{r_{S, u} \, : \, S \in C(u)\}$ and $\mathcal{R} = \{u\}$.
The message $m_v$ of $v = r_{S, u}$ is  the graph topology of $G^+[S]$, and the key of $v = r_{S, u}$ is $k_v = \ecc(u,S)$.
Since each type-1 and type-2 component satisfies $|S| \leq \sqrt{n}$, the maximum  possible value of $\ecc(u,S)$ is $\sqrt{n}$, so
the size of the key space for $\maxsr$  is $K = \sqrt{n}$.

If $|C(u)| > 0$, then the message that $u$ receives from $\maxsr$ is the topology of $G^+[S']$, for a component $S' \in C(u)$ that attains the maximum value of $\ecc(u,S')$ among all components in $C(u)$,  so $u$ may set $A_1[u] = S'$.
If $|C(u)| = 0$, the vertex $u$ receives nothing from  $\maxsr$,  so $u$ may set  $A_1[u] = \emptyset$.
By \cref{lemma:min-sr}, the cost of  $\maxsr$ is $O(\log K \log \Delta \log n) = \poly \log n$.
    \item[{Learning $A_2[u]$.}] The procedure for learning $A_2[u]$ is almost exactly the same as that for $A_1[u]$, with only one difference.
Recall that $A_2[u]$ is a component $S' \in C(u) \setminus \{A_1[u]\}$ that maximizes $\ecc(u,S')$,  so we need to exclude the component $A_1[u]$ from participating. To do so, before we apply $\maxsr$, we  use one round to let $u$ send $\ID(r_{A_1[u],u})$ to all vertices $\{r_{S, u} \, : \, S \in C(u)\}$. This allows each $r_{S,u}$ to learn whether or not $S = A_1[u]$.
\end{description}

For each $u \in V_H$, the number of pairs $\{u,v\}$ such that $F^\star(\{u,v\}) = u$  is  $O(1)$, so the number of graph topologies needed to be learned in  $\mathcal{I}_1(u)$ and  $\mathcal{I}_2(u)$ by $u$ is $O(\sqrt{n})$. 
The total number of graph topologies needed to be learned, for all $u \in V_H$, is at most $|V_H| \cdot O(\sqrt{n}) = O(n)$. We fix an ordering of these learning tasks and solve them sequentially. For each of these tasks, we use the above generic approach to solve the task, so the time and energy cost for learning one graph topology is $\poly \log n$.
  Since there are $O(n)$ tasks, the overall time complexity is $O(n) \cdot \poly \log n = \tilde{O}(n)$. Each vertex participates in  $O(\sqrt{n})$ tasks,  so the overall energy complexity is $O(\sqrt{n}) \cdot \poly \log n = \tilde{O}(\sqrt{n})$.
\end{proof}

\begin{lemma}\label{lem:learnGstar}
   Using $\tilde{O}(n^{1.5})$ time and  $\tilde{O}(\sqrt{n})$ energy, we can let all vertices in $G$ learn the graph topology of $G^\star$ w.h.p.
\end{lemma}
\begin{proof}
The lemma follows from combining \cref{lem:Gstar,lem:learn0,lem:learn12}.
\end{proof}

Now we are ready to prove \cref{lem:diam-ub}.

 \thmdiammain*
\begin{proof}
    The theorem follows from combining \cref{lem:cal-diam,lem:learnGstar}.
\end{proof}

\section{Minimum cut \label{sect.cut}}

In this section, we apply the approach introduced in \cref{sect.planar} to show that (i) the exact global minimum cut size and (ii) an approximate $s$--$t$ minimum cut size of any bounded-genus graph can be computed in $\tilde{O}(\sqrt{n})$ energy. We also show energy lower bounds to complement these results.

\subsection{Global minimum cut \label{sect.cut-ub}}

In this section, we prove \cref{thm:mincut}. Both proofs follow the structure as the one in \cref{sect.planar}. That is, we still decompose the vertex set into $V_H$ and $V_L$, and we classify the connected components of $G[V_L]$ into three types. The only difference here is the information that we extract from type-1 and type-2 components. 

Given a cut $\mathcal{C} = (X, V \setminus X)$ of $G=(V,E)$, the two vertex sets $X \neq \emptyset$ and  $V \setminus X \neq \emptyset$ are called the two parts of $\mathcal{C}$, and the {\em cut edges} of $\mathcal{C}$ are defined as $\{ \{u,v\} \, : \, u \in X, v \in V \setminus X\}$. The \emph{size} of a cut $\mathcal{C}$, which we denote as $|\mathcal{C}|$, is defined as the number of cut edges of $\mathcal{C}$. A {\em minimum cut} of a graph is a cut $\mathcal{C}$ that minimizes $|\mathcal{C}|$ among all possible cuts. An {\em $s$--$t$ minimum cut} of a graph is a cut $\mathcal{C}$ the minimizes $|\mathcal{C}|$ among all possible cuts subject to the constraint that $s$ and $t$ belong to different parts. We consider the following definitions:
\begin{description}
\item [$c(S)$.] For any type-1 component $S$, let $c(S)$ be the minimum cut size of $G^+[S]$.
\item [$c'(S)$.] For any type-2 component $S \in C(u,v)$, let $c'(S)$ be the $u$--$v$ minimum  cut size of $G^+[S]$.
\item [$c''(S)$.]  For any type-2 component $S \in C(u,v)$, let $c''(S)$ be the minimum cut size of  $G^+[S]$ among all cuts such that both $u$ and $v$ are within the same part of the cut.
\end{description}

We make the following observations.

\begin{observation}\label{lem:cut-aux}
Let $\mathcal{C} = (X, V \setminus X)$ be any minimum cut of $G$. For any vertex $u \in V_H$, one of the following statements is true:
\begin{itemize}
\item One part of the cut contains all vertices in $\bigcup_{S \in C(u)} S \cup \{u\}$.
\item the size of the cut is  $\min_{S \in C(u)} c(S)$.
\end{itemize}
\end{observation}
\begin{proof}
Suppose that the  first statement is false. Then there exists a component $S' \in C(u)$ such that $S' \cup \{u\}$ intersects both parts of the cut, so $\mathcal{C}' = (X \cap (S' \cup \{u\}), (V \setminus X) \cap (S' \cup \{u\}))$ is a cut of $G^+[S']$. Therefore, $\min_{S \in C(u)} c(S) \leq c(S') \leq |\mathcal{C}'| \leq |\mathcal{C}|$. 
To prove that the second statement is true, we just need to show that $|\mathcal{C}| \leq \min_{S \in C(u)} c(S)$. This inequality follows from the observation that for any component $S \in C(u)$, any cut of $G^+[S]$ can be extended to a cut of $G$ of the same size by adding all vertices in $V \setminus (S \cup \{u\})$ to the part of the cut that contains $u$. 
\end{proof}

\begin{observation}\label{lem:cut-aux2}
Let $\mathcal{C} = (X, V \setminus X)$ be any minimum cut of $G$. For two distinct vertices $u,v \in V_H$, one of the following statements is true:
\begin{itemize}
\item One part of the cut contains all vertices in  $\bigcup_{S \in C(u,v)} S \cup \{u,v\}$.
\item The size of the cut is $\min_{S \in C(u,v)} c''(S)$.
\item $u$ and $v$ belong to different parts of the cut, and the number of cut edges that have at least one endpoint in $\bigcup_{S \in C(u,v)} S'$ is  $\sum_{S \in C(u,v)} c'(S)$.
\end{itemize}
\end{observation}
\begin{proof}
 Suppose that the first statement is false. We first focus on the case where $u$ and $v$ belong to the same part of the cut $\mathcal{C}$.
In this case, there exists a component $S' \in C(u,v)$ such that $S' \cup \{u,v\}$ intersects both parts of the cut, so $\mathcal{C}' = (X \cap (S' \cup \{u,v\}), (V \setminus X) \cap (S' \cup \{u,v\}))$ is a cut of $G^+[S]$ such that  $u$ and $v$ belong to the same part of the cut. Therefore, $\min_{S \in C(u,v)} c''(S) \leq c''(S') \leq |\mathcal{C}'| \leq |\mathcal{C}|$. Similar to the proof of \cref{lem:cut-aux}, we also have $|\mathcal{C}| \leq \min_{S \in C(u,v)} c''(S)$, as any cut of $G^+[S]$ such that  $u$ and $v$ belong to the same part of the cut can be extended to a cut of $G$ of the same size. Therefore, we must have $|\mathcal{C}| = \min_{S \in C(u,v)} c''(S)$, that is, the second statement is true.

For the rest of the proof,  we consider the case where  $u$ and $v$ belong to different parts of the cut $\mathcal{C}$. For each component $S \in C(u,v)$, we write $Z_{S}$ to denote the number of cut edges of $\mathcal{C}$ that have at least one endpoint in $S$. Then we must have $Z_{S} = c'(S)$, since otherwise $\mathcal{C}$ is not a minimum cut. Therefore, the number of cut edges that have at least one endpoint in $\bigcup_{S \in C(u,v)} S'$ is  $\sum_{S \in C(u,v)} c'(S)$, that is, the third statement is true.
\end{proof}

\paragraph{The graph $G^\diamond$.} Bounded-genus graphs have bounded arboricity.
The minimum degree of any graph of arboricity $\alpha$ is at most $2 \alpha - 1$.
The minimum cut size of any graph is at most the minimum degree of the graph.
Therefore, there is a constant $\lambda_0$ such that the minimum cut size of $G$ is at most $\lambda_0$.
We define the graph $G^\diamond$ as the result of applying the following operations to $G$:
 \begin{itemize}
     \item Remove all type-1 components.
     \item For each pair $\{u,v\}$ of distinct vertices in $V_H$ with $|C(u,v)| > 0$, replace $C(u,v)$ with $\min\{\lambda_0, \sum_{S \in C(u,v)} c'(S)\}$ multi-edges between $u$ and $v$.
 \end{itemize}

In the subsequent discussion, we say that a cut $\mathcal{C}$ of $G$ is \emph{good} if  it satisfies the following conditions:
\begin{itemize}
    \item For each vertex $u \in V_H$, one part of the cut contains all vertices in  $\bigcup_{S \in C(u)} S \cup \{u\}$.
    \item For any two distinct vertices $u, v \in V_H$, if   $u$ and $v$ belong to the same part of the cut, then this part of the cut contains all vertices in $\bigcup_{S \in C(u,v)} S \cup \{u,v\}$.
\end{itemize}

\begin{observation}\label{lem:cut-aux3}
If a minimum cut of $G$ is good, then both $G$  and $G^\diamond$ have the same minimum cut size.
\end{observation}
\begin{proof}
  This observation follows immediately from the construction of  $G^\diamond$. 
\end{proof}

Using \cref{lem:cut-aux,lem:cut-aux2,lem:cut-aux3}, we prove the following lemma.

\begin{lemma}\label{lem-mincut-aux}
The minimum cut size of $G$ is the minimum of the following numbers:
\begin{enumerate}
\item\label{ii1} The minimum value of $\min_{S \in C(u)} c(S)$ among all $u \in V_H$ such that $|C(u)|>0$.
\item\label{ii2} The minimum value of $\min_{S \in C(u,v)} c''(S)$ among all $u,v \in V_H$ such that  $|C(u,v)|>0$.
\item\label{ii3} The minimum cut size of $G^\diamond$.
\end{enumerate}
\end{lemma}
\begin{proof}
For each $S \in C(u)$, there exists a cut of $G^+[S]$ of size $c(S)$, and such a cut can be extended to a cut of $G$ of the same size by adding all vertices in $V \setminus (S \cup \{u\})$ to the part of the cut that contains $u$. 
Similarly, for each $S \in C(u,v)$, there exists a cut of $G^+[S]$ of size $c''(S)$ where both $u$ and $v$ belong to the same part, and such a cut can be extended to a cut of $G$ of the same size by adding all vertices in $V \setminus (S \cup \{u,v\})$ to the part of the cut that contains $u$ and $v$. Therefore,   the minimum cut size of $G$  is at most the minimum value of \cref{ii1,ii2}. 

By \cref{lem-mincut-aux}, we infer that the minimum cut size of $G$  is also at most the value of \cref{ii3}, so now we know that the minimum cut size of $G$  is at most the minimum value of \cref{ii1,ii2,ii3}. To finish the proof, we will show that  the minimum cut size of $G$  is at least the minimum value of \cref{ii1,ii2,ii3}. To do so, we assume that the minimum cut size of $G$  is smaller than the minimum value of \cref{ii1,ii2}, and then our goal is to show that the minimum cut size of $G$ is at least the value of \cref{ii3}. By \cref{lem:cut-aux,lem:cut-aux2}, such an assumption implies that any minimum cut of $G$ is good, so its size equals the minimum cut size of $G^\diamond$ by \cref{lem:cut-aux3}.
\end{proof}
  
\paragraph{Information.} For each vertex $u \in V$, we define $\mathcal{I}^\diamond_0(u)$, $\mathcal{I}^\diamond_1(u)$, and $\mathcal{I}^\diamond_2(u)$ as follows. 
\begin{itemize}
    \item $\mathcal{I}^\diamond_0(u)$ is the same as the basic information $\mathcal{I}_0(u)$ defined in \cref{sect.planar}.
    \item $\mathcal{I}^\diamond_1(u)$ contains the number $\min_{S \in C(u)} c(S)$.
    \item $\mathcal{I}^\diamond_2(u)$ contains the two numbers $\min_{S \in C(u,v)} c''(S)$ and $\min\{\lambda_0, \sum_{S \in C(u,v)} c'(S)\}$, for all pairs $\{u,v\} \in E(G_H)$ such that $F^\star(\{u,v\}) = u$.
\end{itemize}

Note that $\mathcal{I}_1(u)$ and $\mathcal{I}_2(u)$ contain nothing if $u \in V_L$.

\thmcutmain*
 
\begin{proof}
As $\mathcal{I}^\diamond_0(u) = \mathcal{I}_0(u)$, we may use the algorithm of \cref{lem:learn0} to let all vertices $u \in V$ learn the information  $\mathcal{I}^\diamond_0(u)$ using $\tilde{O}(n^{1.5})$ time and $\tilde{O}(\sqrt{n})$ energy.

The algorithm of \cref{lem:learn12} can be modified to allow 
all vertices $u \in V_H$ learn the information  $\mathcal{I}^\diamond_1(u)$ and $\mathcal{I}^\diamond_2(u)$. 
Specifically, the number $\min_{S \in C(u)} c(S)$ can be learned by the same algorithm for learning  $A_1[u]$ described in the proof of \cref{lem:learn0} by replacing 
$\maxsr$ with $\minsr$ and letting $v = r_{S,u}$ use the key $k_v = c(S)$.
The algorithm for learning $\min_{S \in C(u,v)} c''(S)$ is similar. 

For each pair $\{u,v\} \in E(G_H)$ such that $F^\star(\{u,v\}) = u$, to let $u$ learn $\min\{\lambda_0, \sum_{S \in C(u,v)} c'(S)\}$, we use $\apxsr$ with the following parameters:
\begin{itemize}
\item $\mathcal{S} =  \{r_{S, u} \, : \, S \in C(u,v)\}$, where $r_{S,u}$ is the smallest-ID vertex in the set $S \cap N(u)$.
\item $\mathcal{R} = \{u\}$.
    \item $\epsilon = 1/(2\lambda_0 + 1)$.
    \item $W = \lambda_0$.
    \item For each $S \in C(u,v)$, the message $m_v$ of the representative $v = r_{S, u}$ of $S$ is $\min\{\lambda_0, c'(S)\}$.
\end{itemize}
After the algorithm of $\apxsr$, $u$ learns a $(1\pm \epsilon)$-approximation of \[\sum_{v \in N^+(u) \cap \mathcal{S}} m_v = \sum_{S \in C(u,v)} \min\{\lambda_0, c'(S)\}.\] We claim that this allows $u$ to calculate  $\min\{\lambda_0, \sum_{S \in C(u,v)} c'(S)\}$ precisely. To prove this claim, we break the analysis into two cases. Let $x$ be the approximation of  $\sum_{S \in C(u,v)} \min\{\lambda_0, c'(S)\}$ computed by $\apxsr$.

 If  $\min\{\lambda_0, \sum_{S \in C(u,v)} c'(S)\} = \lambda_0$, then \[\sum_{v \in N^+(u) \cap \mathcal{S}} m_v =  \sum_{S \in C(u,v)} \min\{\lambda_0, c'(S)\} \geq \lambda_0,\] which implies \[x \geq (1-\epsilon)\lambda_0 > \lambda_0 - 1/2.\]
 
 If  $\min\{\lambda_0, \sum_{S \in C(u,v)} c'(S)\} = \sum_{S \in C(u,v)} c'(S)$, then \[\sum_{v \in N^+(u) \cap \mathcal{S}} m_v =  \sum_{S \in C(u,v)} \min\{\lambda_0, c'(S)\} = \sum_{S \in C(u,v)} c'(S),\] which implies 
\begin{align*}
x &\in \left[(1-\epsilon)\sum_{S \in C(u,v)} c'(S),  (1+\epsilon)\sum_{S \in C(u,v)} c'(S)\right]\\
&\subseteq \left(\left(\sum_{S \in C(u,v)} c'(S)\right)- \frac{1}{2},  \left(\sum_{S \in C(u,v)} c'(S)\right) +\frac{1}{2}\right).    
\end{align*} 
 Therefore, $u$ can  calculate  $\min\{\lambda_0, \sum_{S \in C(u,v)} c'(S)\}$ precisely from $x$.
By \cref{lemma:apxsr}, the cost for $u$ to calculate  $\min\{\lambda_0, \sum_{S \in C(u,v)} c'(S)\}$ via $\apxsr$ is $\poly \log n$ time. 

For each $u \in V_H$, the number of pairs $\{u,v\}$ such that $F^\star(\{u,v\}) = u$  is  $O(1)$, so the number of parameters needed to be learned in  $\mathcal{I}^\diamond_1(u)$ and  $\mathcal{I}^\diamond_2(u)$ by $u$ is $O(1)$. 
The total number of parameters needed to be learned, for all $u \in V_H$, is at most $|V_H| \cdot O(1) = O(\sqrt{n})$. We fix any ordering of these learning tasks and solve them sequentially. The time and energy cost for learning one parameter is $\poly \log n$.
  Since there are $O(\sqrt{n})$ tasks, the overall time complexity for learning $\mathcal{I}^\diamond_1(u)$ and  $\mathcal{I}^\diamond_2(u)$ for all $u \in V_H$ is $O(\sqrt{n}) \cdot \poly \log n = \tilde{O}(\sqrt{n})$.

In view of \cref{lem-mincut-aux}, the minimum cut size of $G$ can be calculated from the following information: (i) $\mathcal{I}^\diamond_1(u)$ and $\mathcal{I}^\diamond_2(u)$ for all $u \in V_H$, (ii) the
 topology of $G^+[S]$ for each type-3 component $S$, and (iii) the topology of the subgraph induced
by $V_H$. By replacing $\mathcal{I}_1(u)$ and $\mathcal{I}_2(u)$ with $\mathcal{I}^\diamond_1(u)$ and $\mathcal{I}^\diamond_2(u)$ in the description of the algorithm of \cref{lem:Gstar}, we obtain an algorithm that
 lets all vertices learn this information using $\tilde{O}(n^{1.5})$ time and $\tilde{O}(\sqrt{n})$ energy.
\end{proof}

\subsection{Approximate \texorpdfstring{$s$--$t$}{st} minimum cut \label{sect.cut-ub2}}

In this section, we prove \cref{thm:stmincut}. The proof of \cref{thm:stmincut} is similar to that of  \cref{thm:mincut}. The main difference for the setting of $s$--$t$ minimum cut is that if $s$ or $t$ happens to be within a type-1 or a type-2 component $S$, then we additionally need to learn the  topology of $G^+[S]$.
Any type-1 component that does not contain $s$ or $t$ is irrelevant to the $s$--$t$ minimum cut size.

In the subsequent discussion, we fix $s$ and $t$ to be any two distinct vertices of $G$. for each $x \in \{s,t\}$, let $S_x$ be the type-1 or type-2 component containing $x$. In case $x$ is not contained in any type-1 or type-2 component, we let $S_x = \emptyset$.
We define $G^\bullet$ as the result of applying the following operations to $G$. 
\begin{itemize}
    \item Remove all type-1 components, except for $S_s$ and $S_t$.
    \item For each pair $\{u,v\}$ of distinct vertices in $V_H$ with $|C(u,v) \setminus \{S_s, S_t\}| > 0$, replace all components in $C(u,v) \setminus \{S_s, S_t\}$ with  $\sum_{S \in C(u,v) \setminus \{S_s, S_t\}} c'(S)$ multi-edges between $u$ and $v$.
\end{itemize}

Similar to \cref{lem:cut-aux,lem:cut-aux2}, we have the following observation.

\begin{observation}\label{obs2}
Both $G$ and $G^\bullet$ have the same minimum $s$--$t$ cut size.
\end{observation}
\begin{proof}
Fix $\mathcal{C} = (X, V \setminus X)$ to be any minimum $s$--$t$ cut of $G$, where $s \in X$ and $t \in V\setminus X$. To show that both $G$ and $G^\bullet$ have the same minimum $s$--$t$ cut size, it suffices to show the following two statements:
\begin{itemize}
    \item For each type-1 component $S$ that is not $S_s$ and $S_t$, we must have either $S \subseteq X$ or $S \subseteq V\setminus X$.
    \item For each pair $\{u,v\}$ of distinct vertices in $V_H$ with $|C(u,v) \setminus \{S_s, S_t\}| > 0$, if $u$ and $v$ belong to different parts of cut $\mathcal{C}$, then the number of cut edges of $\mathcal{C}$ with at least one endpoint in $\bigcup_{S \in C(u,v) \setminus \{S_s, S_t\}} S$ equals $\sum_{S \in C(u,v) \setminus \{S_s, S_t\}} c'(S)$.
\end{itemize}

The first statement follows from the observation that for each $u \in V_H$, all vertices in $\bigcup_{S \in C(u) \setminus \{S_s, S_t\}} S$ must belong to the part of cut $\mathcal{C}$ that $u$ belongs to, since otherwise $\mathcal{C}$ is not a minimum $s$--$t$ cut, as moving all vertices in $\bigcup_{S \in C(u) \setminus \{S_s, S_t\}} S$ to the part of cut that $u$ belongs to reduces the number of cut edges.

To show the second statement, consider a pair $\{u,v\}$ of distinct vertices in $V_H$ with $|C(u,v) \setminus \{S_s, S_t\}| > 0$ such that $u$ and $v$ belong to different parts of cut $\mathcal{C}$. Similar to the proof of \cref{lem:cut-aux2}, for each component $S \in C(u,v) \setminus \{S_s, S_t\}$, we write $Z_{S}$ to denote the number of cut edges of $\mathcal{C}$ that have at least one endpoint in $S$. Then we must have $Z_{S} = c'(S)$, since otherwise $\mathcal{C}$ is not a minimum cut. Therefore, the number of cut edges of $\mathcal{C}$ that have at least one endpoint in $\bigcup_{S \in C(u,v) \setminus \{S_s, S_t\}} S'$ is  $\sum_{S \in C(u,v) \setminus \{S_s, S_t\}} c'(S)$.
\end{proof}

We are ready to prove \cref{thm:stmincut}.

\thmcutapxmain*

\begin{proof}
The proof is very similar to the proof of \cref{thm:mincut}, so here we only describe the differences.
Let $\tilde{G}^\bullet$ be any graph such that for each pair of vertices $\{u,v\}$, the number of multi-edges in $\tilde{G}^\bullet$ is within a $(1 \pm \epsilon)$ factor
 of the number of multi-edges in $G^\bullet$.
By \cref{obs2}, the minimum $s$--$t$ cut size in  $\tilde{G}^\bullet$ is a $(1 \pm \epsilon)$-approximation of the minimum $s$--$t$ cut size of $G$.
Therefore, the task of computing the minimum $s$--$t$ cut size of $G$ is reduced to computing such a graph $\tilde{G}^\bullet$.

For each $u \in V_H$, we let  $\mathcal{I}_2^{\bullet}(u)$ contain the number $\sum_{S \in C(u,v) \setminus \{S_s, S_t\}} c'(S)$ for all pairs $\{u,v\} \in E(G_H)$ with $F^\star(\{u,v\}) = u$.
The same algorithm for learning $\mathcal{I}_2^{\diamond}(u)$ presented in the proof of \cref{thm:mincut} can be applied here to let all  $u \in V_H$ learn $\mathcal{I}_2^{\bullet}(u)$. Specifically, for each pair $\{u,v\} \in E(G_H)$ such that $F^\star(\{u,v\}) = u$, to let $u$ learn $\sum_{S \in C(u,v) \setminus \{S_s, S_t\}} c'(S)$, we use $\apxsr$ with $\epsilon$ and the following parameters:
\begin{itemize}
\item $\mathcal{S} =  \{r_{S, u} \, : \, S \in C(u,v) \setminus \{S_s, S_t\}\}$, where $r_{S,u}$ is the smallest-ID vertex in the set $S \cap N(u)$.
\item $\mathcal{R} = \{u\}$.
    \item $W = \binom{n}{2}$ is an upper bound on $|E| \geq c'(S)$ for any $S$.
    \item For each $S \in C(u,v) \setminus \{S_s, S_t\}$, the message $m_v$ of the representative $v = r_{S, u}$ of $S$ is $c'(S)$.
\end{itemize}
By \cref{lemma:apxsr}, the round complexity of $\apxsr$ is $\poly(\log n, 1/\epsilon)$. For each $u \in V_H$, the number of pairs $\{u,v\}$ such that $F^\star(\{u,v\}) = u$  is  $O(1)$, so the number of parameters needed to be learned in  $\mathcal{I}_2^\bullet(u)$ by $u$ is $O(1)$. 
The total number of parameters needed to be learned, across all $u \in V_H$, is at most $|V_H| \cdot O(1) = O(\sqrt{n})$. 
  Since there are $O(\sqrt{n})$ learning tasks and each vertex participates in $O(1)$ of them, the overall cost for learning   $\mathcal{I}_2^\bullet(u)$ for all $u \in V_H$ is $\tilde{O}(\sqrt{n}) \cdot \epsilon^{-O(1)}$ time and $\poly(\log n, 1/\epsilon)$ energy.

By \cref{obs2}, a $(1\pm \epsilon)$-approximation of the minimum $s$--$t$ cut size of $G$ can be calculated from the following information: 
(i)  $\mathcal{I}_2^\bullet(u)$ for all $u \in V_H$, (ii) the
 topology of $G^+[S]$ for $S = S_s$, $S = S_t$, and each type-3 component $S$, and (iii) the topology of the subgraph induced
by $V_H$, as they allow us to obtain the desired graph $\tilde{G}^\bullet$. Same as the proof of \cref{thm:mincut}, we may let all vertices learn this information  using $\tilde{O}(n^{1.5})$ time and $\tilde{O}(\sqrt{n})$ energy.

Hence there is an algorithm that computes a $(1 \pm \epsilon)$-approximate $s$--$t$ minimum cut size in $\tilde{O}(n^{1.5}) + \tilde{O}(\sqrt{n}) \cdot \epsilon^{-O(1)}$ time and $\tilde{O}(\sqrt{n} + \epsilon^{-O(1)})$ energy w.h.p.
\end{proof}

\subsection{Lower bounds \label{sect.cut-lb}}

In this section, we prove the two lower bounds: \cref{lem:learn-deg,lem:learn-deg2}.

\thmLBa*

\begin{proof}
Suppose that there is a randomized algorithm $\mathcal{A}$ that computes the exact $s$--$t$ minimum cut size of any planar bipartite graph with high probability and using $o(n)$ energy.

Let $G$ be a complete bipartite graph $K_{2,\Delta}$ with the bipartition $\{s, t\} \cup \{v_1, \ldots, v_{\Delta}\}$.
Set $X = \Delta/5$.
We select $\Delta$ to be sufficiently large so that it is guaranteed that 
both $s$ and $t$ use at most $X$ unit of energy in an execution of $\mathcal{A}$ on $G$.

Let $G'$ be the result of removing $v_{\Delta}$ from $G$.
The size of a $s$--$t$  minimum cut of $G$ is $\Delta$, and the size of a $s$--$t$  minimum cut of $G'$ is $\Delta - 1$.
Therefore, $\mathcal{A}$ allows $s$ to correctly distinguish between $G$ and $G'$ with high probability.

Consider an execution of $\mathcal{A}$ on $G$. Let $S$ be the subset of $\{v_1, \ldots, v_{\Delta}\}$ such that $v_i \in S$ if there is a time slot $\tau$ where (i) $v_i$ transmits, (ii) the number of vertices in $\{v_1, \ldots, v_{\Delta}\}$ that transmit is at most 2, and (iii) at least one of $s$ and $t$ listens.

We claim that $|S| \leq 4X = 4\Delta/5$. Let $T$ be the set of all time slots $\tau$ such that the above conditions (i), (ii), and (iii) hold for at least one $v_i \in \{v_1, \ldots, v_{\Delta}\}$. In view of condition (ii), we must have $|T| \geq |S|/2$. In view of condition (iii),  if $\tau \in T$, then at least one of $s$ and $t$ must listen at time $\tau$, so the energy cost of one of $s$ and $t$ must be at least $|T| / 2 \geq |S|/ 4$, which implies $X \geq |S| / 4$.

Let $\mathcal{E}$ be the event that $v_{\Delta} \notin S$ in an execution of $\mathcal{A}$ on $G$.
Whether or not $\mathcal{E}$ occurs depends only on the local randomness stored in the vertices $\{s, t\}$ and $\{v_1, \ldots, v_{\Delta}\}$. Since $|S| \leq 4\Delta/5$, at least $1/5$ fraction of the vertices in 
$\{v_1, \ldots, v_{\Delta}\}$ are not in $S$.
Since the probability that $v_{i} \notin S$ is identical for all $v_i \in \{v_1, \ldots, v_{\Delta}\}$, we have $\Prob[\mathcal{E}] \geq 1/5$.

Consider the following scenario.
All vertices in $\{s, t\}$ and $\{v_1, \ldots, v_{\Delta}\}$ have decided their random bits in advance.
With probability $1/2$, we run $\mathcal{A}$ on $G$. With probability $1/2$, we run $\mathcal{A}$ on $G'$.
If $\mathcal{E}$ occurs, then the execution of  $\mathcal{A}$ on both $G$ and $G'$ is completely identical from the point of view of each vertex, except for $v_{\Delta}$. Therefore, conditioning on  event $\mathcal{E}$, the probability that vertex $s$  correctly decides whether the underlying graph is $G$ or $G'$ is at most $1/2$, as $s$ can only guess randomly.

Since $\Prob[\mathcal{E}] \geq  1/5$, the probability that vertex $s$ fails to correctly decide whether the underlying graph is $G$ or $G'$ is at least $(1/2) \cdot (1/5) = 1/10$,  so $s$ fails to correctly calculate the  $s$--$t$ minimum cut with probability at least  $1/10$ in the above scenario. This contradicts the assumption that $\mathcal{A}$ is able to compute the $s$--$t$ minimum cut with high probability.
\end{proof}

The lower bound of \cref{lem:learn-deg} can be expressed in terms of the maximum degree $\Delta$. For  graphs with maximum degree $\Delta$, the  proof of \cref{lem:learn-deg} shows an $\Omega(\Delta)$ energy lower bound.

\thmLBb*

\begin{proof}
Consider the case where the underlying graph is $K_n$ with probability $1/2$, and is $K_n - e$  with probability $1/2$, where the edge $e$ is chosen uniformly at random from the set of all edges in $K_n$.
Let $\mathcal{A}$ be any randomized algorithm that computes the size of a minimum cut exactly with high probability. Observe that the  size of a  minimum cut of  $K_n$ is $n-1$ and the  size of a  minimum cut of $K_n - e$ is $n-2$,  so  $\mathcal{A}$ is able to  distinguish between $K_n$  and $K_n - e$  with high probability. It was shown in~\cite{Chang20bfs} that any  algorithm that distinguishes between $K_n$  and $K_n - e$ with success probability at least $3/4$ necessarily has energy cost $\Omega(n)$ in both $\cd$ and $\nocd$, so the randomized energy complexity of $\mathcal{A}$ is $\Omega(n)$.
\end{proof}

\section*{Acknowledgments} The work was partly done when the author was a student at the University of Michigan. The author would like to thank his advisor, Seth Pettie, for valuable discussions on this research topic and helpful comments on earlier drafts of this paper.

\bibliographystyle{abbrv}
\bibliography{references}

\newpage
\appendix

\section*{\LARGE Appendix}

\section{Algorithms for communication between two sets of vertices\label{sect.sr}}
In this section, we present our algorithms for $\sr$ and its variants. 
Recall that $\sr$ requires that 
each vertex $v \in \mathcal{R}$ with $N^+(v) \cap \mathcal{S} \neq \emptyset$ receives a message
$m_u$ from {\em at least one} vertex $u \in N^+(v) \cap \mathcal{S}$ w.h.p.

\begin{lemma}[\cite{bar1992time}]\label{lemma:sr}
$\sr$ can be solved in time $O(\log \Delta \log n)$ and energy $O(\log \Delta \log n)$.
\end{lemma}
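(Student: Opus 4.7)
The plan is to use a $\Theta(\log n)$-fold repetition of the classical Bar-Yehuda--Goldreich--Itai decay protocol, and observe that it fits the SR-communication interface, including the case $\mathcal{S} \cap \mathcal{R} \neq \emptyset$. First I would describe the protocol: for each of $\Theta(\log n)$ independent phases, and for each level $i = 0, 1, \ldots, \lceil \log \Delta \rceil$ inside the phase, every $u \in \mathcal{S}$ transmits $m_u$ independently with probability $2^{-i}$, while every $v \in \mathcal{R} \setminus \mathcal{S}$ listens. Vertices in $\mathcal{S} \cap \mathcal{R}$ do not need to listen, because $v \in N^+(v) \cap \mathcal{S}$ means $m_v$ already satisfies the requirement at $v$, so the only receivers I have to service are $v \in \mathcal{R} \setminus \mathcal{S}$.

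For correctness, fix such a $v$ and let $d = |N^+(v) \cap \mathcal{S}| \in [1,\Delta]$. At the matching level $i^\star = \lceil \log d \rceil$, each of the $d$ relevant senders transmits with probability $p = 2^{-i^\star} \in [1/(2d), 1/d]$, so the probability that exactly one of them transmits in that slot is $d p (1-p)^{d-1} \geq \Omega(1)$ (using $(1-1/d)^{d-1} \geq 1/e$; the $d=1$ case is immediate). In $\mathsf{No}\text{-}\mathsf{CD}$, ``exactly one neighbor transmits'' is precisely the condition under which $v$ successfully receives a valid $m_u$. Hence each phase succeeds at $v$ with constant probability, and over $c \log n$ independent phases the failure probability at $v$ is at most $(1-\Omega(1))^{c\log n} = n^{-\Omega(c)}$. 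A union bound over the at most $n$ relevant receivers gives total failure probability $1/\poly(n)$.

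For the complexity bounds, the schedule uses $(\lceil \log \Delta \rceil + 1)\cdot\Theta(\log n) = O(\log \Delta \log n)$ time slots, which is the runtime. A listener in $\mathcal{R}\setminus\mathcal{S}$ spends at most one energy unit per slot, for a total of $O(\log \Delta \log n)$ energy. A sender's energy per phase is a sum of independent $\mathrm{Bernoulli}(2^{-i})$ trials, bounded trivially by $O(\log \Delta)$ per phase and so by $O(\log \Delta \log n)$ overall, which is enough for the claim (a Chernoff bound would actually give the sharper $O(\log n)$ bound on sender energy with high probability, but we do not need it here).

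The main thing to get right is the constant-probability lower bound at the matching level $i^\star$; beyond that the argument is bookkeeping. A small subtlety worth flagging in the write-up is the half-duplex treatment of vertices in $\mathcal{S}\cap\mathcal{R}$: the observation that the SR-specification is automatically satisfied at such vertices (by their own message $m_v$) is what lets us avoid any conflict between their sender role and receiver role, so no extra machinery is required.
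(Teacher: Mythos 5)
Your proposal is correct and is essentially the same as the paper's proof: both invoke the Bar-Yehuda--Goldreich--Itai decay protocol repeated $\Theta(\log n)$ times, dispose of the $\mathcal{S}\cap\mathcal{R}$ case by the observation that $v\in N^+(v)\cap\mathcal{S}$ already satisfies the spec, and obtain constant success probability per phase at the level $i^\star$ matching $|N(v)\cap\mathcal{S}|$ via the $dp(1-p)^{d-1}\geq 1/(2e)$ bound, with the same time and energy bookkeeping.
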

\begin{proof}
By the definition of $\sr$, each vertex $v \in \mathcal{S} \cap \mathcal{R}$ is not required to receive any message from other vertices, as we already have $v \in N^+(v) \cap \mathcal{S}$.
Therefore, in the subsequent discussion, we assume that $\mathcal{S} \cap \mathcal{R} = \emptyset$.

The task $\sr$ with $\mathcal{S} \cap \mathcal{R} = \emptyset$ can be solved using the well-known \emph{decay} algorithm of~\cite{bar1992time}, which repeats the following routine for $C \log n$ times: For $i = 1$ to $\log \Delta$, let each vertex $u \in \mathcal{S}$ transmit with probability $2^{-i}$. Each $v \in \mathcal{R}$ is always listening throughout the procedure. Here $C > 0$ is some large enough constant to be determined.

Consider a vertex $v \in \mathcal{R}$ such that $N(v) \cap \mathcal{S} \neq \emptyset$.
Let $i^\star$ be the largest integer $i$ such that $2^{i} \leq 2|N(v) \cap \mathcal{S}|$.
Consider a time slot $t$ where each vertex $u \in \mathcal{S}$ transmits with probability $2^{-i^\star}$.
For notational simplicity, we write $n' = |N(v) \cap \mathcal{S}|$ and $p' = 2^{-i^\star}$.
Our choice of $i^\star$  implies that $1/n' \geq p' \geq 1/(2n')$.
The probability of the event that exactly one vertex in the set $N(v) \cap \mathcal{S}$ transmits equals $n' p' (1 - p')^{n'-1} \geq 1/(2e)$. The calculation follows from the inequalities $n' p' \geq 1/2$ and $(1 - p')^{n'-1} \geq (1 - 1/n')^{n'-1} \geq 1/e$.

If the above event occurs, then $v$ successfully receives a message $m_u$ from a vertex $u \in N(v) \cap \mathcal{S}$.
The probability that $v$ does not receive any message from vertices in $N(v) \cap \mathcal{S}$ throughout the entire algorithm is at most $(1 - 1/(2e))^{C \log n} = n^{-\Omega(C)}$. By setting $C$ to be a large enough constant, the algorithm successfully solves $\sr$ w.h.p., and the time and energy complexities of the algorithm are $O(\log \Delta \log n)$.
\end{proof}

Recall that the goal of $\allsr$ is to let each vertex  $u \in \mathcal{S} \cap N^+(v)$ deliver a message $m_u$ to $v \in \mathcal{R}$, for each $v \in \mathcal{R}$.

\begin{lemma}\label{lemma:allsr}
$\allsr$ can be solved in time $O(\Delta' \log n)$ and energy $O(\Delta' \log n)$, where $\Delta'$ is an upper bound on $|\mathcal{S} \cap N(v)|$, for each $v \in \mathcal{R}$.
\end{lemma}
\begin{proof}
Consider the algorithm which repeats the following routine for $C \cdot \Delta' \log n$ rounds, for some sufficiently large constant $C > 0$. In each round, each vertex $u \in \mathcal{S}$ sends $m_u$ with probability $1 / \Delta'$. For each $u \in \mathcal{R}$, if $u$ does not send in this round, then $u$ listens.

Let $e = \{u,v\}$ be any edge with $u \in \mathcal{S}$ and $v \in \mathcal{R}$. In one round of the above algorithm,  $u$ successfully sends a message to $v$ if (i) all vertices in $\{v\} \cup (\mathcal{S} \cap N(v)) \setminus \{u\}$ do not send, and (ii) $u$ sends.
Therefore, the probability that $u$ successfully sends a message to $v$ is
\[
(1 - 1/\Delta')^{|\mathcal{S} \cap N(v)|-1} \cdot (1/\Delta') \geq  (1 - 1/\Delta')^{\Delta'-1} \cdot (1/\Delta') \geq 1/(e\Delta')
\]
The probability that $u$ does not successfully send a message to $v$ throughout all $C \cdot \Delta' \log n$ rounds is at most $(1 - 1/(e\Delta'))^{C \cdot \Delta' \log n} = n^{-\Omega(C)}$. Selecting a large enough constant $C$, by a union bound for all $u \in \mathcal{S} \cap N(v)$ and all $v \in \mathcal{R}$, we conclude that the algorithm solves $\allsr$ w.h.p. The time and energy complexities are $O(\Delta' \log n)$.
\end{proof}

 Recall that the task $\multisr$ requires that each vertex  $v \in \mathcal{R}$ receive all distinct messages in $\bigcup_{u \in N^+(v) \cap \mathcal{S}} \mathcal{M}_u$, where is the $\mathcal{M}_u$ is the set of messages hold by $u$.

\begin{lemma}\label{lemma:multisr}
$\multisr$ can be solved in time $O(M \log \Delta \log^2 n)$ and energy $O(M  \log \Delta \log^2 n)$, where $M$ is an upper bound on the number of distinct messages in $\bigcup_{u \in N^+(v) \cap \mathcal{S}}$, for each $v \in \mathcal{R}$.
\end{lemma}
\begin{proof}
Consider the algorithm which repeatedly runs  $\sr$  for $C \cdot M \log n$ times, where in each iteration, the sets $(\mathcal{S}',\mathcal{R'})$ for $\sr$ are chosen randomly as follows.
We select $\mathcal{R'}$ as a random subset of  $\mathcal{R}$ such that each $v \in \mathcal{R}$ joins $\mathcal{R'}$ with probability $1/2$.
We select $\mathcal{S}'$ as a random subset of $\mathcal{S} \setminus \mathcal{R'}$ such that for each message $m$, all vertices in  $\mathcal{S} \setminus \mathcal{R'}$  that hold $m$ join $\mathcal{S}'$  with probability $1/M$, using the shared randomness associated with the message $m$.

Due to the shared randomness, if $u \in \mathcal{S} \setminus \mathcal{R'}$ joins $\mathcal{S}'$ due to message $m$, then all vertices in $\mathcal{S} \setminus \mathcal{R'}$ holding the same message $m$ also joins $\mathcal{S}'$. Note that a vertex $u \in \mathcal{S} \setminus \mathcal{R'}$ might hold more than one message in that $|\mathcal{M}_u| > 1$. The probability that $u \in \mathcal{S} \setminus \mathcal{R'}$ joins $\mathcal{S}'$ equals $\Prob[\binomial(|\mathcal{M}_u|, 1/M) \geq 1]$, because each message $m \in \mathcal{M}_u$ lets $u$ join   $\mathcal{S}'$ with probability $1/M$ independently.

To analyze the algorithm, we focus on one vertex $v \in \mathcal{R}$ in one iteration of the above algorithm.
Consider any message $m \in \bigcup_{u \in N(v) \cap \mathcal{S}} \mathcal{M}_u \setminus \mathcal{M}_v$.
Observe that $v$ receives $m$ if the following three  events $\mathcal{E}_1$, $\mathcal{E}_2$, and $\mathcal{E}_3$ occur:
\begin{itemize}
    \item $\mathcal{E}_1$ is the event that $v$ joins $\mathcal{R}'$.
    \item $\mathcal{E}_2$ is the event that at least one vertex $u \in N(v) \cap \mathcal{S}$ with $m \in \mathcal{M}_u$ does not join $\mathcal{R}'$.
    \item $\mathcal{E}_3$ is the event that the subset of vertices of $N(v) \cap \mathcal{S} \setminus \mathcal{R'}$  joining $\mathcal{S}'$ is exactly the set of all vertices $u \in N(v) \cap \mathcal{S} \setminus \mathcal{R'}$ with $m \in \mathcal{M}_u$.
\end{itemize}
If $\mathcal{E}_1$, $\mathcal{E}_2$, and $\mathcal{E}_3$ occur, then $v \in \mathcal{R}'$, $N(v) \cap \mathcal{S}' \neq \emptyset$, and all vertices $u \in N(v) \cap \mathcal{S}'$ satisfy $m \in \mathcal{M}_u$. Therefore, conditioning on $\mathcal{E}_1$, $\mathcal{E}_2$, and $\mathcal{E}_3$, $\sr$ in this iteration allows $v$ to receive message $m$.

The way $\mathcal{R}'$ is selected implies that  $\Prob[\mathcal{E}_1] = 1/2$ and $\Prob[\mathcal{E}_2] \geq 1/2$. Observe that $\mathcal{E}_1$ and $\mathcal{E}_2$ are independent events. The way $\mathcal{S}'$ is selected implies that  $\Prob[\mathcal{E}_3 | \mathcal{E}_1 \cap \mathcal{E}_2] \geq \Prob[\binomial(M, 1/M)=1] = (1/M) \cdot (1 - 1/M)^{M-1} \geq 1/(eM)$. Therefore, the probability that $v$ receives $m$ in this iteration is at least $1/(4eM)$.

The probability that $v$ does not receive $m$ in all iterations is at most $(1 - 1/(4eM))^{C \cdot M \log n} = n^{-\Omega(C)}$. Selecting a large enough constant $C$, by a union bound for all $v \in \mathcal{R}$ and all $m \in \bigcup_{u \in N(v) \cap \mathcal{S}} \mathcal{M}_u \setminus \mathcal{M}_v$,  we conclude that the algorithm solves $\allsr$ w.h.p. The time and energy complexities are $O(M  \log \Delta \log^2 n)$, as the number of iterations is $O(M \log n)$ and the time complexity of each iteration is $O(\log \Delta \log n)$ by \cref{lemma:sr}.
\end{proof}

Consider the setting where the message $m_u$ sent from each vertex $u \in \mathcal{S}$ contains a key $k_u$ from the key space $[K] = \{1, 2, \ldots, K\}$. Recall that  $\minsr$ requires that each vertex $v \in \mathcal{R}$ with $N^+(v) \cap \mathcal{S} \neq \emptyset$ receives a message $m_u$  from a vertex $u \in N^+(v) \cap \mathcal{S}$ such that $k_u = \min_{u' \in N^+(v) \cap \mathcal{S}} k_{u'}$.

\begin{lemma}\label{lemma:min-sr}
Both $\minsr$ and $\maxsr$  can be solved in time $O(K \log \Delta \log n)$ and energy $O(\log K \log \Delta \log n)$.
For the special case of $\mathcal{S} \cap \mathcal{R} = \emptyset$ and $|\mathcal{R} \cap N(u)| \leq 1$ for each $u \in \mathcal{S}$, the time complexity  can be improved to  $O(\log K \log \Delta \log n)$.
\end{lemma}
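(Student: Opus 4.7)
The plan is to implement $\minsr$ via a binary search over the key space $[K]$, realized with $O(K)$ sequential invocations of the plain $\sr$ primitive from Lemma~\ref{lemma:sr}. I view $[K]$ as a complete binary tree with $\log K$ levels; at level $j$ the key space is partitioned into $2^j$ contiguous buckets. For each bucket $B$ I will run one instance of $\sr$ in which precisely the vertices $u \in \mathcal{S}$ with $k_u \in B$ transmit their message $m_u$, and all of $\mathcal{R}$ may listen. There are $\sum_{j=1}^{\log K} 2^j = O(K)$ such invocations, giving total runtime $O(K \log \Delta \log n)$; since each $u \in \mathcal{S}$ lies in exactly one bucket per level, $u$ is awake in only $\log K$ of them, for energy $O(\log K \log \Delta \log n)$.

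On top of this fixed global schedule, each $v \in \mathcal{R}$ performs an on-the-fly binary search: it maintains an interval $I_v$ (initialized to $[K]$) known to contain $\min_{u' \in N^+(v) \cap \mathcal{S}} k_{u'}$, and at level $j$ it listens in exactly the $\sr$ corresponding to the left half $L$ of $I_v$. If $v$ hears a message it caches it and sets $I_v \leftarrow L$; otherwise $I_v \leftarrow I_v \setminus L$. After $\log K$ levels $|I_v| \leq 2$, and one further listen in the remaining singleton finishes the search and guarantees $v$ holds a message from a minimum-key neighbor. So $v$ is awake in at most $\log K + 1$ invocations, giving $O(\log K \log \Delta \log n)$ energy. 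Correctness is immediate from Lemma~\ref{lemma:sr} after a union bound over $O(K) \leq \poly(n)$ instances; $\maxsr$ follows by replaying the same construction with the key $K + 1 - k_u$ in place of $k_u$.

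For the special case $\mathcal{S} \cap \mathcal{R} = \emptyset$ and $|\mathcal{R} \cap N(v)| \leq 1$ for every $v \in \mathcal{S}$, I will collapse the global schedule to only $O(\log K)$ $\sr$'s. The hypothesis guarantees that each $u \in \mathcal{S}$ has at most one intended receiver, so running $\sr$ with the roles reversed, $(\mathcal{S}', \mathcal{R}') = (\mathcal{R}, \mathcal{S})$, delivers each $v$'s private state to its $u$-neighbors without collision. The algorithm then becomes a direct MSB-first binary search: in round $j$, (i) each $v$ broadcasts its current length-$(j-1)$ min-prefix $p_v$ to its $\mathcal{S}$-neighbors via such a reverse $\sr$; (ii) each $u$ whose key begins with $p_v$ and has $j$-th bit $0$ transmits $m_u$ in a standard $\sr$; (iii) the analogous $\sr$ for $j$-th bit $1$. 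Each $v$ listens in (ii) and, if nothing is heard, also in (iii), learning the $j$-th bit of its minimum and, at round $j = \log K$, collecting the corresponding message. This uses $3 \log K$ sequential $\sr$'s, for $O(\log K \log \Delta \log n)$ time and energy.

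The main obstacle is the coordination tension inherent to the general case: each $v \in \mathcal{R}$ wants its own local minimum, and different neighborhoods can have wildly different minima, so there is no single global threshold to binary-search against. I expect this is precisely what forces the global schedule to enumerate all $O(K)$ buckets of the key tree; the energy saving comes from each individual vertex needing only the $O(\log K)$ buckets along its own root-to-leaf path. The special-case hypothesis lets us bypass the enumeration by having $\mathcal{R}$ cheaply inform each $u$ of the relevant prefix, which is what collapses time to match energy.
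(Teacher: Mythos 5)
Your proposal is correct and matches the paper's strategy almost exactly: an MSB-first binary search over the key space, implemented as a level-by-level global schedule of $O(K)$ $\sr$ instances indexed by key prefixes in which each vertex participates in only $O(\log K)$ of them, and with the special case collapsed to $O(\log K)$ $\sr$'s by having $\mathcal{R}$ push its running prefix back to $\mathcal{S}$ via a reversed $\sr$ (valid precisely because $|\mathcal{R}\cap N(u)|\le 1$ makes that reverse communication collision-free). The only superficial differences are that the paper has each $v\in\mathcal{R}$ join $\mathcal{R}'$ for both sibling buckets at each level rather than inferring the bit from silence in one, and its special case interleaves two $\sr$'s per bit rather than your three.
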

\begin{proof}
We only prove the lemma for $\minsr$, as the proof for $\maxsr$ is the same.
The proof presented here is analogous to the analysis of a deterministic version of $\sr$ in~\cite{ChangDHHLP18}.
Observe that we can do $\sr$ once to let each $v \in \mathcal{R}$ test whether or not $N^+(v) \cap \mathcal{S} \neq \emptyset$.
If a vertex $v \in \mathcal{R}$  knows that $N^+(v) \cap \mathcal{S} = \emptyset$, then $v$ may remove itself from $\mathcal{R}$.
Thus, in the subsequent discussion, we assume $N^+(v) \cap \mathcal{S} \neq \emptyset$ for each $v \in \mathcal{R}$.

Let $v \in \mathcal{R}$, and we define $f_v = \min_{u \in N^+(v) \cap \mathcal{S}} k_u$.
The high-level idea of the algorithm is to conduct a binary search to determine all $\log K$ bits of the binary representation of $f_v$.

\paragraph{General case.}
Suppose at some moment each vertex $v \in \mathcal{R}$ already knows the first $x$ bits of $f_v$.
The following procedure allows each $v \in \mathcal{R}$ to learn the $(x+1)$th bit of $f_v$.
For each $(x+1)$-bit binary string $s$, we do $\sr$ with the following choices of $(\mathcal{S}',\mathcal{R}')$:
\begin{itemize}
    \item $\mathcal{S}'$ is the set of vertices $u \in \mathcal{S}$ such that the first $x+1$ bits of $k_u$ equal $s$.
    \item $\mathcal{R}'$ is the set of vertices $v \in \mathcal{R}$ such that the first $x$ bits of $f_v$ equal the first $x$ bits of $s$.
\end{itemize}
In this procedure, we perform $2^{x+1}$ times of $\sr$ in total, but each vertex only participates in at most three of them, as each vertex  joins $\mathcal{S}'$ at most once and joins $\mathcal{R}'$ at most twice.
Thus, the procedure costs $O(2^x \log \Delta \log n)$ time and $O(\log \Delta \log n)$ energy, by \cref{lemma:sr}. For each $v \in \mathcal{R}$, the messages that $v$ receive during the procedure allows $v$ to determine the $(x+1)$th bit of $f_v$. 

We will run the above procedure for $\log K$ iterations from $x=0$ to $x = \log K - 1$. Observe that in the last iteration, each vertex $v \in \mathcal{R}$ is guaranteed to receive a message $m_u$ from a vertex $u \in N^+(v) \cap \mathcal{S}$ such that $k_u = f_v = \min_{w \in N^+(v) \cap \mathcal{S}} k_{w}$, so this algorithm allows us to solve $\minsr$.
The overall time complexity of the algorithm  is \[\sum_{x=0}^{\log K - 1} O(2^x \log \Delta \log n) = O(K \log \Delta \log n),\] and the overall energy complexity  of the algorithm  is 
\[\sum_{x=0}^{\log K - 1} O(\log \Delta \log n) = O(\log K \log \Delta \log n).\]

\paragraph{Special case.}
For the rest of the proof, we focus on the special case of $\mathcal{S} \cap \mathcal{R} = \emptyset$ and $|\mathcal{R} \cap N(u)| \leq 1$ for each $u\in \mathcal{S}$. These assumptions imply that the family of sets $(\mathcal{S} \cap N(v)) \cup \{v\}$ for all $v \in \mathcal{R}$ are disjoint.
The high-level idea is that for each $v \in \mathcal{R}$, we may let the set of vertices $(\mathcal{S} \cap N(v)) \cup \{v\}$ jointly conduct a binary search to determine all bits of $f_v = \min_{u \in N(v) \cap \mathcal{S}} k_u$, in parallel for all $v \in \mathcal{R}$.

Suppose that for each vertex $v \in \mathcal{R}$, all vertices in the set $(\mathcal{S} \cap N(v)) \cup \{v\}$ already know the first $x$ bits of $f_v$. We present a more efficient algorithm that let all vertices in the set $(\mathcal{S} \cap N(v)) \cup \{v\}$  learn the $(x+1)$th bit of $f_v$.
\begin{description}
\item[Step 1.] Perform $\sr$ with the following choices of $(\mathcal{S}',\mathcal{R}')$:
\begin{itemize}
    \item $\mathcal{R}' = \mathcal{R}$.
    \item $\mathcal{S}'$ is the subset of $\mathcal{S}$ that contains all vertices $u \in \mathcal{S}$ satisfying the following conditions:
\begin{itemize}
    \item The first $x$ bits of $k_u$ equal the first $x$ bits of $f_v$, where $v$ is the unique vertex in $\mathcal{R} \cap N(u)$.
    \item The $(x+1)$th bit of $k_u$ is 0.
\end{itemize}    
\end{itemize}
 This step allows each  $v \in \mathcal{R}$ to learn the  $(x+1)$th bit of $f_v$. If $v \in \mathcal{R}$ receives a message in $\sr$, then $v$ knows that the  $(x+1)$th bit of $f_v$ is 0. Otherwise, $v$ knows that the  $(x+1)$th bit of $f_v$ is 1.

\item[Step 2.] Perform $\sr$ with the following choices of $(\mathcal{S}',\mathcal{R}')$:
\begin{itemize}
    \item $\mathcal{R}' = \mathcal{S}$.
    \item $\mathcal{S}' = \mathcal{R}$.
\end{itemize}
 This step lets each  $v \in \mathcal{R}$  send the  $(x+1)$th bit of $f_v$ to all vertices in $\mathcal{S} \cap N(v)$.
\end{description}
The time and energy complexities of this algorithm are asymptotically the same as that of $\sr$, which are $O(\log \Delta \log n)$. As discussed earlier, to solve $\minsr$, all we need to do is to run the above algorithm from $x=0$ to $x = \log K - 1$.
The overall time and energy complexities of the algorithm for $\minsr$ are $O(\log K \log \Delta \log n)$, as there are $\log K$ iterations.
\end{proof}

For the rest of the section, we consider the task $\apxsr$, which requires each vertex $v \in \mathcal{R}$ to compute a $(1 \pm \epsilon)$-factor approximation of the summation $\sum_{u \in N^+(v) \cap \mathcal{S}} m_u$.
We need the following fact, whose correctness can be verified by means of a simple calculation. 

\begin{fact}\label{lem:aux-cal}
There exist three universal constants $0 < \epsilon_0 < 1$, $N_0 \geq 1$, and $c_0 \geq 1$ such that the following statement holds: For any pair of numbers $(N,\epsilon)$  such that $N \geq N_0$ and  $\epsilon_0 \geq |\epsilon| \geq c_0 / \sqrt{N}$, 
\[e^{-1}(1 - 0.51 \epsilon^2) \leq (1+\epsilon) (1 -  (1+\epsilon)/N)^{N-1} \leq e^{-1}(1 - 0.49 \epsilon^2).\]
\end{fact}

Note that the parameter $\epsilon$ in \cref{lem:aux-cal} can be either positive or negative.
For the rest of the section, we assume that the message $m_u$ sent from each vertex $u \in \mathcal{S}$ is an integer within the range $[W]$.
We first consider the special case of  $\apxsr$ with $W = 1$. In this case, $\apxsr$ is the same as the approximate counting problem whose goal is to let each $v \in \mathcal{R}$ compute  $|N^+(v) \cap \mathcal{S}|$, up to a $(1 \pm \epsilon)$-factor error.

\begin{lemma}\label{lemma:apxsr-1}
For  $W = 1$, $\apxsr$ can be solved in  $O((1/\epsilon^5) \log \Delta \log n)$ time and energy.
\end{lemma}
\begin{proof}
In this proof, we will focus on a slightly different task of estimating $|N(v) \cap \mathcal{S}|$ within a $(1 \pm \epsilon)$-factor approximation, for each $v \in \mathcal{R}$.
If each $v \in \mathcal{R}$ knows such an estimate of $|N(v) \cap \mathcal{S}|$, then $v$ can locally calculate an estimate of $|N^+(v) \cap \mathcal{S}|$ within a $(1 \pm \epsilon)$-factor approximation, thereby solving $\apxsr$ for the case of $W=1$.

\paragraph{Basic setup.}
Let $C > 0$ be a sufficiently large constant.
Let $\epsilon_0, N_0$, and $c_0$ be the constants in \cref{lem:aux-cal}.
We assume that  $\epsilon \leq \epsilon_0$. If this is not the case, then we may reset $\epsilon = \epsilon_0$.

The algorithm consists of two phases. The   first phase of the algorithm aims to achieve the following goals: For each  $v \in \mathcal{R}$, either (i) $v$  learns the number $|N(v) \cap \mathcal{S}|$ exactly or (ii) $v$   detects that  $\epsilon \geq 10 c_0 / \sqrt{|N(v) \cap \mathcal{S}|}$.
For each vertex $v \in \mathcal{R}$ that calculates the number $|N(v) \cap \mathcal{S}|$ exactly in the first phase, we remove $v$ from $\mathcal{R}$.
The second phase of the algorithm then solves $\apxsr$ for the remaining vertices in $\mathcal{R}$. These vertices $v \in \mathcal{R}$ satisfy $\epsilon \geq  10 c_0 / \sqrt{|N(v) \cap \mathcal{S}|}$.

\paragraph{The first phase.}
We define $Z = (10 c_0 / \epsilon)^2$.
The algorithm consists of $C \cdot Z \log n$ rounds, where we do the following in each round:
\begin{itemize}
    \item Each vertex $u \in \mathcal{S} \cup \mathcal{R}$ flips a biased coin that produces head with probability $1/Z$.
    \item Each $u \in \mathcal{S}$ sends $\ID(u)$ if the outcome of its coin flip is head.
    \item Each vertex $v \in \mathcal{R}$  listens if the outcome of its coin flip is tail.
\end{itemize}
For each vertex $v \in  \mathcal{R}$, there are two cases:
\begin{itemize}
    \item Suppose that there is a vertex $u \in N(v) \cap \mathcal{S}$ such that the number of messages that $v$ receives from is smaller than $0.5 \cdot (C \log n)/e$. Then $v$ decides that $\epsilon \geq 10 c_0 / \sqrt{|N(v) \cap \mathcal{S}|}$ and proceeds to the second phase.
    \item Suppose that for all vertices $u \in N(v) \cap \mathcal{S}$, the number of messages that $v$ receives from is at least $0.5 \cdot (C \log n)/e$. Then  $v$ calculate $|N(v) \cap \mathcal{S}|$ by the number of distinct $\ID$s that $v$ receives.
\end{itemize}
The time complexity of the first phase of the algorithm is $C \cdot Z \log n = O((1/\epsilon^{2}) \log n)$.

\paragraph{Analysis.}
To analyze the algorithm, let $e = \{u,v\}$ be any edge such that $u \in \mathcal{S}$ and $v \in \mathcal{R}$. In one round of the above algorithm, $u$ successfully sends a message to $v$ if and only if (i) the outcome of $u$'s coin flip is head, and (ii) the outcome of the coin flips of all vertices in $(N(v)\cap \mathcal{S}) \cup \{v\} \setminus \{u\}$ are all tails. This event occurs with probability $p^\star = (1 - 1/Z)^{|N(v) \cap \mathcal{S}|} \cdot (1/Z)$.
Let $X$ be the number of times $v$ receives a message from $u$.
To prove the correctness of the algorithm, we show the following three concentration bounds:
\begin{itemize}
\item If $v \in  \mathcal{R}$ satisfies $\epsilon \leq  10 c_0 / \sqrt{|N(v) \cap \mathcal{S}|}$, then $\Prob[X \geq 0.8 \cdot (C \log n)/e] = 1 - n^{-\Omega(C)}$.
\item If $v \in  \mathcal{R}$ satisfies $\epsilon \geq  20 c_0 / \sqrt{|N(v) \cap \mathcal{S}|}$, then $\Prob[X \leq 0.2 \cdot (C \log n)/e] = 1 - n^{-\Omega(C)}$.
\item If $v \in  \mathcal{R}$ satisfies $\epsilon \leq  20 c_0 / \sqrt{|N(v) \cap \mathcal{S}|}$, then $\Prob[X \geq 1] = 1 - n^{-\Omega(C)}$.
\end{itemize}

 We show the correctness of the algorithm given these concentration bounds.   For the case  $\epsilon \geq 20 c_0 / \sqrt{|N(v) \cap \mathcal{S}|}$, the second bound implies  that the number of messages that $v$ receives from $u$ is greater than $0.5 \cdot (C \log n)/e$ w.h.p., so $v$ correctly decides that $\epsilon \geq 10 c_0 / \sqrt{|N(v) \cap \mathcal{S}|}$ and proceeds to the second phase.   For the case  $\epsilon \leq 20 c_0 / \sqrt{|N(v) \cap \mathcal{S}|}$, the third bound implies  that $v$ receives at least one message from each vertex in $N(v) \cap \mathcal{S}$ w.h.p., so  $v$ can calculate $|N(v) \cap \mathcal{S}|$ precisely.  The only remaining thing to show is that when $\epsilon$ is at most $10 c_0 / \sqrt{|N(v) \cap \mathcal{S}|}$, w.h.p.~$v$ does not decide that $\epsilon \geq 10 c_0 / \sqrt{|N(v) \cap \mathcal{S}|}$. This follows from the first bound, which implies that the number of messages that $v$ receives from $u$ is greater than $0.5 \cdot (C \log n)/e$ w.h.p.

We prove the three concentration bounds as follows:
\begin{itemize}
    \item Suppose that  vertex $v \in  \mathcal{R}$ satisfies $\epsilon \leq 10 c_0 / \sqrt{|N(v) \cap \mathcal{S}|}$.
We show that in this case the number of messages that $v$ receives from $u \in N(v) \cap \mathcal{S}$ is at least $0.8 \cdot (C \log n)/e$, with probability $1 - n^{-\Omega(C)}$.
In this case, we have $Z = (10 c_0 / \epsilon)^2 \geq |N(v) \cap \mathcal{S}|$, so 
$p^\star = (1 - 1/Z)^{|N(v) \cap \mathcal{S}|} \cdot (1/Z)  \geq  (1 - 1/Z)^{Z} \cdot (1/Z) \geq 0.9 /(e Z)$.
The expected value $\mu$ of $X$ satisfies  $\mu  = C \cdot Z \log n \cdot p^\star \geq  0.9 (C \log n)/e$.
By a Chernoff bound, $\Prob[X \leq 0.8 \cdot (C \log n)/e] \leq \exp(-\Omega(C \log n)) = n^{-\Omega(C)}$.
\item Suppose that   vertex $v \in  \mathcal{R}$ satisfies $\epsilon \geq 20 c_0 / \sqrt{|N(v) \cap \mathcal{S}|}$.
We show that in this case the number of messages that $v$ receives from $u \in N(v) \cap \mathcal{S}$ is at most $0.2 \cdot (C \log n)/e$, with probability $1 - n^{-\Omega(C)}$.
In this case, we have $Z = (10 c_0 / \epsilon)^2 \leq |N(v) \cap \mathcal{S}| / 4$, so
$p^\star = (1 - 1/Z)^{|N(v) \cap \mathcal{S}|} \cdot (1/Z) \leq  (1 - 1/Z)^{4Z} \cdot (1/Z) \leq 1/(e^4 Z)$.
The expected value $\mu$ of $X$ satisfies  $\mu  = C \cdot Z \log n \cdot p^\star \leq  (C \log n)/ e^4 < 0.1 (C \log n)/ e$.
By a Chernoff bound, $\Prob[X \geq 0.2 \cdot (C \log n)/e] \leq \exp(-\Omega(C \log n)) = n^{-\Omega(C)}$.
\item Suppose that  vertex $v \in  \mathcal{R}$ satisfies $\epsilon \leq  20c_0 / \sqrt{|N(v) \cap \mathcal{S}|}$.
We show that in this case the number of messages that $v$ receives from $u \in N(v) \cap \mathcal{S}$ is at least $1$, with probability $1 - n^{-\Omega(C)}$.
In this case, we have $Z = (10 c_0 / \epsilon)^2 \geq |N(v) \cap \mathcal{S}| / 4$, so
$p^\star = (1 - 1/Z)^{|N(v) \cap \mathcal{S}|} \cdot (1/Z) \geq  (1 - 1/Z)^{4Z} \cdot (1/Z) \geq  0.9 /(e^4 Z)$.
We have $\Prob[X < 1] = (1-p^\star)^{CZ \log n} \leq  (1 - 0.9 /(e^4 Z))^{CZ \log n}   = n^{-\Omega(C)}$.
\end{itemize}

\paragraph{The second phase.}
For each vertex $v \in \mathcal{R}$ that have
 already calculated the number $|N(v) \cap \mathcal{S}|$ exactly in the first phase,   $v$ removes itself from $\mathcal{R}$.
We know that all the remaining vertices in $\mathcal{R}$ satisfy $\epsilon \geq  10 c_0 / \sqrt{|N(v) \cap \mathcal{S}|}$.
 
We consider the sequence of sending probabilities: $p_1 = 2/\Delta$, and $p_i = \min\{1, p_{i-1} \cdot (1+ \epsilon)\}$ for $i > 1$. We let $i^\star = O((1/\epsilon) \log \Delta)$ be the smallest index $i$ such that $p_i = 1$.

The second phase of the algorithm consists of $i^\star$ iterations, where the $i$th iteration repeats the following procedure for $C \cdot (1/\epsilon^4) \log n$ times for all  vertices $v \in \mathcal{S} \cup \mathcal{R}$:
\begin{itemize}
    \item $v$ flips a fair coin.
    \item If the outcome of the coin flip is head and $v \in \mathcal{S}$, then $v$ sends with probability $p_i$.
    \item If the outcome of the coin flip is tail and  $v \in \mathcal{R}$, then $v$ listens to the channel.
\end{itemize}

After finishing the algorithm, each vertex $v \in \mathcal{R}$ finds an index $i'$ such that the number of messages that $v$ successfully receives during the $i'$th iteration is the highest. Then $v$ decides that $2 / p_{i'}$ is an  estimate of $|N(v) \cap \mathcal{S}|$ within a factor of $(1\pm \epsilon)$.
The time complexity of the second phase of the algorithm is $i^\star \cdot C \cdot (1/\epsilon^4) \log n = O((1/\epsilon^5) \log \Delta \log n)$.

\paragraph{Analysis.} To show the correctness of the above algorithm, in the subsequent discussion, we focus on a vertex $v \in \mathcal{R}$ in the $i$th iteration.
We say that $i$ is {\em good} for $v$ if $p_i/2$ is within a  $(1\pm 0.6\epsilon)$-factor of $1/|N(v) \cap \mathcal{S}|$,
and we say that $i$ is {\em bad} for $v$ if $p_i/2$ is not within a  $(1\pm \epsilon)$-factor of $1/|N(v) \cap \mathcal{S}|$.
Our choice of the sequence $(p_1, p_2, \ldots)$ implies that there must be at least one good index $i$ for $v$.

We write  $p_i^{\text{suc}}$ to denote the probability that $v$ successfully receives a message in one round of the $i$th iteration.
From the description of the algorithm, we have \[p_i^{\text{suc}} = (1/2) \cdot |N(v) \cap \mathcal{S}| \cdot (p_i / 2) \cdot (1-(p_i / 2))^{|N(v) \cap \mathcal{S}|-1}.\]
We define
\[ p_{\text{good}} = (1/2) \cdot e^{-1}(1 - 0.51 (0.6 \epsilon)^2) \ \ \ \text{and}  \ \ \ p_{\text{bad}} = (1/2) \cdot e^{-1}(1 - 0.49 \epsilon^2).\]
We claim that (i) $p_i^{\text{suc}} \geq p_{\text{good}}$ if $i$ is good for $v$ and (ii) $p_i^{\text{suc}} \leq p_{\text{bad}}$ if $i$ is bad for $v$.

We first prove this claim for the case that $i$ is good for $v$. For simplicity, we write $N = |N(v) \cap \mathcal{S}|$. Since $i$ is good, $p_i / 2 = (1 + \epsilon')/|N(v) \cap \mathcal{S}|$ for some $\epsilon' \in [-0.6 \epsilon, 0.6 \epsilon]$. Using the new notations, we may rewrite  $p_i^{\text{suc}}$
as 
\[p_i^{\text{suc}} = (1/2) \cdot |N(v) \cap \mathcal{S}| \cdot (p_i / 2) \cdot (1-(p_i / 2))^{|N(v) \cap \mathcal{S}|-1}
= (1/2) \cdot (1 + \epsilon') \cdot (1-(1 + \epsilon'))^{N-1}.
\]
By \cref{lem:aux-cal}, we infer that $p_i^{\text{suc}} \geq  (1/2) \cdot  e^{-1}(1 - 0.51 (\epsilon')^2) \geq e^{-1}(1 - 0.51 (0.6 \epsilon)^2) = p_{\text{good}}$.

Now consider the case $i$ is bad for $v$. Again, we write $N = |N(v) \cap \mathcal{S}|$. Since $i$ is bad, $p_i / 2 = (1 + \epsilon')/|N(v) \cap \mathcal{S}|$ for some $\epsilon' \notin (-\epsilon, \epsilon)$. The above formula for $p_i^{\text{suc}}$ still applies to this case, and \cref{lem:aux-cal} implies that $p_i^{\text{suc}} \leq  (1/2) \cdot  e^{-1}(1 - 0.49 (\epsilon')^2) \leq e^{-1}(1 - 0.49 \epsilon^2) = p_{\text{bad}}$.

Let $X$ be the number of messages that $v$ receives in the $i$th iteration of the algorithm. The expected value of $X$ is $\mu = p_i^{\text{suc}} \cdot C  \cdot (1/\epsilon^4) \log n$.  For the case $i$ is good for $v$, we have $\mu \geq p_{\text{good}} \cdot C  \cdot(1/\epsilon^4) \log n$, so by a Chernoff bound, we have:
\[\Prob[X \leq (1-0.01 \epsilon^2) p_{\text{good}}  \cdot C \cdot(1/\epsilon^4) \log n] =  \exp(-\Omega(\epsilon^4 \cdot C \cdot(1/\epsilon^4) \log n)) = n^{-\Omega(C)}.\]
For the case $i$ is bad for $v$, we have  $\mu \leq p_{\text{bad}} \cdot C  \cdot (1/\epsilon^4) \log n$, so by a Chernoff bound, we have:
\[\Prob[X \geq (1+0.01 \epsilon^2) p_{\text{bad}}  \cdot C \cdot(1/\epsilon^4) \log n] =  \exp(-\Omega(\epsilon^4 \cdot C \cdot(1/\epsilon^4) \log n)) =  n^{-\Omega(C)}.\]

Since $(1-0.01 \epsilon^2) p_{\text{good}} > (1+0.01 \epsilon^2) p_{\text{bad}}$, we conclude that w.h.p.~the index $i'$ selected by $v$ must be good, which implies that the estimate $2/p_{i'}$ calculated by $v$ is within a $(1\pm \epsilon)$-factor of $|N(v) \cap \mathcal{S}|$, as we know that $p_{i'}/2$ is within a $(1\pm 0.6\epsilon)$-factor of $1/|N(v) \cap \mathcal{S}|$, as $i'$ is good.
\end{proof}

In the following lemma, we extend \cref{lemma:apxsr-1} to any value of $W$.

\begin{lemma}\label{lemma:apxsr}
 $\apxsr$ can be solved in   $O((1/\epsilon^6) \log W \log \Delta \log n)$ time and energy.
\end{lemma}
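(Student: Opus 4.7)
The plan is to reduce the general case to the $W=1$ case handled by Lemma~\ref{lemma:apxsr-1} via a bitwise decomposition of the integer messages. Writing each $m_u \in [W]$ in binary as $m_u = \sum_{j=0}^{\log W - 1} 2^j\, m_u[j]$ with $m_u[j] \in \{0,1\}$, the target quantity at each $v \in \mathcal{R}$ factorizes as
\[
\sum_{u \in N^+(v) \cap \mathcal{S}} m_u \;=\; \sum_{j=0}^{\log W - 1} 2^j\, N_j(v), \qquad N_j(v) \,\bydef\, \bigl|\{u \in N^+(v) \cap \mathcal{S} : m_u[j] = 1\}\bigr|.
\]
Thus it suffices to let each $v \in \mathcal{R}$ learn a good multiplicative estimate of $N_j(v)$ for every bit index $j$.

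The algorithm I would execute is: for $j = 0, 1, \ldots, \log W - 1$, set $\mathcal{S}_j = \{u \in \mathcal{S} : m_u[j] = 1\}$ (each $u \in \mathcal{S}$ decides locally whether $u \in \mathcal{S}_j$), and invoke the approximate-counting algorithm of Lemma~\ref{lemma:apxsr-1} on the pair $(\mathcal{S}_j, \mathcal{R})$ with accuracy parameter $\epsilon' = \Theta(\epsilon)$, producing an estimate $\tilde{N}_j(v)$ at every $v \in \mathcal{R}$. Afterwards, each $v$ locally outputs $\tilde{S}(v) = \sum_{j=0}^{\log W - 1} 2^j\, \tilde{N}_j(v)$.

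The correctness analysis I expect to be routine: since every $N_j(v) \geq 0$ and with high probability $\tilde{N}_j(v) \in [(1-\epsilon') N_j(v), (1+\epsilon') N_j(v)]$, a nonnegative linear combination preserves multiplicative error, so $\tilde{S}(v) \in [(1-\epsilon') S(v), (1+\epsilon') S(v)]$ where $S(v)$ is the true sum. Choosing $\epsilon' = \epsilon$ (or a small constant fraction thereof) is sufficient; the corner case $N_j(v) = 0$ is harmless because Lemma~\ref{lemma:apxsr-1} returns $\tilde{N}_j(v) = 0$ when $v$ receives no messages throughout that invocation. A union bound over the $\log W$ invocations preserves the $1/\poly(n)$ failure guarantee.

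For the cost, each of the $\log W$ invocations of Lemma~\ref{lemma:apxsr-1} costs $O((1/\epsilon^5) \log \Delta \log n)$ time and energy per vertex, yielding a total of $O((1/\epsilon^5) \log W \log \Delta \log n)$, which sits comfortably inside the stated $O((1/\epsilon^6) \log W \log \Delta \log n)$ budget. I do not foresee a genuine obstacle: all the probabilistic machinery is sealed inside the subroutine of Lemma~\ref{lemma:apxsr-1}, and the reduction itself is a deterministic bit-by-bit accumulation with transparent error behavior.
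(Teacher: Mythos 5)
Your proof is correct, and it takes a genuinely different (and in fact tighter) route than the paper. The paper partitions the value range $[W]$ into geometrically spaced buckets $w_i = (1+\epsilon')^{i-1}$ and runs approximate counting once per bucket; rounding each $m_u$ to the top of its bucket incurs an additional multiplicative $(1\pm\epsilon')$ loss, so the paper must split the error budget (choosing $\epsilon'$ with $(1+\epsilon')^2 \le 1+\epsilon$), and it needs $i^\star = O((1/\epsilon)\log W)$ iterations, which is the source of the $1/\epsilon^6$ in the final bound. Your bitwise decomposition $m_u = \sum_j 2^j m_u[j]$ is exact, so the only error source is the counting error itself: a nonnegative linear combination of $(1\pm\epsilon)$-accurate estimates is again $(1\pm\epsilon)$-accurate, which you argue correctly, and the $N_j(v)=0$ corner case is indeed harmless as you note. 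You use only $\log W$ invocations, giving $O((1/\epsilon^5)\log W\log\Delta\log n)$, which is a factor-$(1/\epsilon)$ improvement over the paper's $O((1/\epsilon^6)\log W\log\Delta\log n)$ and of course lands inside the claimed bound. In short: same key subroutine (Lemma~\ref{lemma:apxsr-1}), different decomposition of $[W]$, with your version being both simpler to analyze (no rounding step) and asymptotically cheaper.
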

\begin{proof}
We let $\epsilon' = \Theta(\epsilon)$ be chosen such that $(1+\epsilon')^2 < 1+\epsilon$ and $(1-\epsilon')^2 > 1-\epsilon$.
We consider the following sequence:  $w_1 = 1$ and $w_i = \min\{ W, (1+\epsilon')w_{i-1}\}$ for $i > 1$.
Let $i^\star$ be the smallest index $i$ such that $w_i = W$.

From $i = 1$ to $i^\star$, we run the algorithm of \cref{lemma:apxsr-1} with the following setting: 
\begin{itemize}
    \item  $\mathcal{S}'$ is the vertices $u \in \mathcal{S}$ with $m_u \in (w_{i-1}, w_{i}]$.
    \item  $\mathcal{R}' = \mathcal{R}$.
    \item The error parameter is $\epsilon'$.
\end{itemize}
The algorithm of \cref{lemma:apxsr-1} lets each $v \in \mathcal{R}'$ compute a $(1 \pm \epsilon')$-factor approximation of $|N^+(v) \cap \mathcal{S}'|$ using $O((1/\epsilon^5) \log \Delta \log n)$ time and energy.

For each $v \in \mathcal{R}$, we write $N_i$ to denote the number of vertices $u \in N^+(v) \cap \mathcal{S}$ such that  $m_u \in (w_{i-1}, w_{i}]$, and we write $\tilde{N}_i$ to denote the estimate of $|N^+(v) \cap \mathcal{S}'|$ computed by $v$ in the $i$th iteration. We have the following observations:
\begin{itemize}
    \item $\tilde{N}_i$  is a $(1\pm \epsilon')$-factor approximation of $N_i$.
    \item  $\sum_{i=1}^{i^\star} w_i N_i$ is a $(1\pm \epsilon')$-factor approximation of $\sum_{u \in N^+(v) \cap \mathcal{S}} m_u$.
\end{itemize}
Thus, $\sum_{i=1}^{i^\star} w_i \tilde{N}_i$, which can be calculated locally at $v$ at the end of the algorithm, is a $(1\pm \epsilon)$-factor approximation of $\sum_{u \in N^+(v) \cap \mathcal{S}} m_u$, by our choice of $\epsilon'$.

By \cref{lemma:apxsr-1}, the time and energy complexities for each iteration are $O((1/\epsilon^5)\log \Delta  \log n)$. The total number of iterations is $i^\star = O((1/\epsilon) \log W)$. Thus, the overall time and energy complexities are $O((1/\epsilon^6) \log W \log \Delta \log n)$.
\end{proof} 

\end{document}